\documentclass[12pt,a4paper,oneside]{scrartcl}

\usepackage{setspace}

\usepackage{graphicx}
\usepackage{amsmath,amssymb,amsthm}
\usepackage{url}
\usepackage{ifthen}
\usepackage{natbib}
\usepackage{color}
\usepackage{subfigure}
\usepackage{epstopdf}
\usepackage{arydshln}
\usepackage[colorlinks, linkcolor = black, citecolor = black, filecolor = black, urlcolor = blue]{hyperref}
\usepackage{soul}
\usepackage{comment}
\usepackage{multirow}
\usepackage[symbol]{footmisc}

\pdfminorversion=7

\theoremstyle{plain}
\newtheorem{theorem}{Theorem}%
\newtheorem{corollary}[theorem]{Corollary}%
\newtheorem{proposition}[theorem]{Proposition}%
\newtheorem{lemma}[theorem]{Lemma}%
\newtheorem{remark}[theorem]{Remark}%

\usepackage{marginnote}

\usepackage[colorinlistoftodos,textsize=tiny]{todonotes}
\newcommand{\Comments}{1}
\newcommand{\mynote}[2]{\ifnum\Comments=1\textcolor{#1}{#2}\fi}
\newcommand{\mytodo}[2]{\ifnum\Comments=1%
	\todo[linecolor=#1!80!black,backgroundcolor=#1,bordercolor=#1!80!black]{#2}\fi}

\newcommand{\BK}[1]{\mytodo{green!20!white}{BK: #1}}

\ifnum\Comments=1               
\setlength{\marginparwidth}{0.8in}
\fi

%

%


\newcommand{\N}{\ensuremath{\mathbb{N}}}%
\newcommand{\Z}{\ensuremath{\mathbb{Z}}}%
\newcommand{\R}{\ensuremath{\mathbb{R}}}%
\newcommand{\C}{\ensuremath{\mathbb{C}}}%
%
%
%

%
%
%
%
%
%

%
%
%
%
%

%
%
%

%
%
%

%
%
%

%
%
%

%
%
%

%
%
%
%

%
%
%
%
%
%
%
%
%
%
%
%
%
%
%
%
%
%
%
%
%
%
%

%

%
%
%
%
%
%
%

%
\newcommand{\ind}{{\boldsymbol 1}}

\DeclareMathOperator*{\MISE}{\operatorname{MISE}}

\DeclareMathOperator*{\ISB}{\operatorname{ISB}}
\DeclareMathOperator*{\AMISE}{\operatorname{AMISE}}
\DeclareMathOperator*{\AVar}{\operatorname{AVar}}

\DeclareMathOperator*{\Imag}{\operatorname{Im}}

\newcommand{\cp}{\stackrel{p}{\longrightarrow}}

%


\begin{document}
\title{Nonparametric estimation on the circle based on Fej\'er polynomials}

\author{Bernhard Klar$^{1}$, Bojana Milo\v sevi\' c\,$^{2}$, and Marko Obradovi\' c\,$^{2}$ \\
  \small{$^{1}$ Institute of Stochastics, Karlsruhe Institute of Technology (KIT), Germany, \footnote{bernhard.klar@kit.edu}} \\
  \small{$^{2}$ Faculty of Mathematics, University of Belgrade, Serbia, \footnote{bojana@matf.bg.ac.rs, marcone@matf.bg.ac.rs}}}

\date{\today }
\maketitle

\begin{abstract}
This paper presents a comprehensive study of nonparametric estimation techniques on the circle using Fej\'er polynomials, which are analogues of Bernstein polynomials for periodic functions. Building upon Fej\'er’s uniform approximation theorem, the paper introduces circular density and distribution function estimators based on Fej\'er kernels. It establishes their theoretical properties, including uniform strong consistency and asymptotic expansions. Since the estimation of the distribution function on the circle depends on the choice of the origin, we propose a data-dependent method to address this issue.    

The proposed methods are extended to account for measurement errors by incorporating classical and Berkson error models, adjusting the Fej\'er estimator to mitigate their effects. Simulation studies analyze the finite-sample performance of these estimators under various scenarios, including mixtures of circular distributions and measurement error models. An application to rainfall data demonstrates the practical application of the proposed estimators, demonstrating their robustness and effectiveness in the presence of rounding-induced Berkson errors. 
\end{abstract}

{\small
{\textbf{\textit{Keywords:}}} circular density estimation; distribution function estimation; Fej\'er's theorem; Fej\'er kernel; measurement error model.}

\section{Introduction}
Only a few years after \cite{Ro:1956,Ro:1971} and \cite{PA:1962} proposed kernel estimators for nonparametric probability density estimation, \cite{VI:1975} developed an alternative method based on Bernstein polynomials. Since this seminal work, both approaches have been developed more or less independently in many subsequent papers, including extensions to nonparametric estimation of the cumulative distribution function (CDF). The idea pursued by Vitale is ultimately based on Bernstein's proof of Weierstrass' approximation theorem, in which the (unknown) density or CDF is replaced by an estimator; it turns out that the result can also be interpreted as a kernel density estimator. The situation is similar for extensions of the method to distributions on the positive real numbers, where the uniform approximation of functions is based on Poisson weights instead of binomial weights as for the Bernstein polynomials.

The purpose of this paper is to analyze the merits of a corresponding method for distributions on the circle. There are already proposals in this direction: \cite{CW:2018} proposed the use of Bernstein polynomials for density estimation on the circle; however, this does not provide a periodic estimator in general. Therefore, \cite{CH:2022} recommended a nonlinear transformation combined with the Bernstein polynomial density estimator on the unit interval to obtain a periodic density; no further theory is given. However, both proposals are adaptations of the linear case to the circle and do not seem to exploit the special situation of the periodicity of the occurring densities. 

An analogue of Bernstein’s proof of Weierstrass’ approximation theorem for the case of periodic functions can be found in Fej\'er's approximation theorem, where periodic functions are uniformly approximated by trigonometric polynomials of order $n$, the so-called Fej\'er polynomials. If the unknown quantities are replaced by empirical counterparts, nonparametric estimators on the circle are obtained naturally. Again, the dualism mentioned above is evident; the estimators can be interpreted as (integrated) kernel estimators. The resulting density estimator appears in \cite{DM:2009} but without further analysis (see comments in Section \ref{sec:asym-exp}).
We also mention that \cite{WI:1975} used Fej\'er polynomials for density estimation on the real line.

\smallskip
Regarding nonparametric density estimation on the circle with kernel-type estimators, there exist several proposals. Estimates on (hyper-)spheres ${\cal S}^{p-1}$ were proposed by \cite{HW:1987} and \cite{BR:1988}. Their proposals are tailored to the case $p\geq 3$, leading to rather strong conditions imposed on the kernels. A summary of the results can be found in \cite{LV:2017}.
Since the Fej\'er kernel does not fulfil the conditions on the kernel \citep[Remark 3]{DM:2009}, the results can not be used for the density estimation based on Fej\'er polynomials. 
\cite{DM:2009,DM:2011} provided a theoretical basis for kernel density estimators on the circle and torus. They defined a circular density estimator by
\begin{align} \label{def:cde}
\tilde{f}_n(\theta;\kappa) &= \frac{1}{n}\sum_{j=1}^n K_\kappa(\theta-X_j),
\end{align}
where $K_\kappa:[0,2\pi)\to\R$ is a circular kernel (of order $r$) with concentration parameter $\kappa>0$ satisfying certain conditions (which, as mentioned above, are not satisfied by the Fej\'er kernel). Typically, $K$ equals a circular density function with center 0.  
\cite{CH:2018} considers the special case of the wrapped Cauchy distribution, i.e. $K_\kappa(\theta-\Theta_j)=f_{WC}(\theta;\Theta_j,\kappa)$ and shows connections to certain Fourier series estimators on the circle. Another candidate is the density of the von Mises distribution \citep{TA:2008}.
However, $K$ can also be a more general function, e.g. the toroidal kernels used in \cite{DM:2011}.
\cite{TE:2022} introduced so-called delta sequence estimators and analyzed their asymptotic behavior in detail.

Smooth estimation of a circular CDF has been studied in \cite{DM:2012} for the von Mises kernel, and for general kernels in \cite{AG:2024}.

\smallskip
The structure of the paper is as follows. 
In subsection \ref{subsec11},  we review the literature on nonparametric estimators based on Bernstein polynomials and similar approximations in the linear case and show the connections to kernel estimators. Section 2 introduces the density estimator for circular distributions based on Fej\'er polynomials and gives first properties. In particular, we show the uniform strong consistency using Fej\'er's theorem. Asymptotic expansions for the Fej\'er density estimator and plug-in bandwidth selection are discussed in subsections \ref{sec:asym-exp} and \ref{sec:bandwith-select}.

Section \ref{sec:CDF-est} addresses CDF estimation based on Fejér polynomials, with asymptotic expansions for this estimator derived in \ref{sec:Fejer-CDF}. We also propose a data-dependent method for selecting the origin in the estimation of the distribution function.
Density estimation with classical and Berkson measurement error is the topic of Section \ref{sec:meas-error}. A simulation study in Section \ref{sec:sim-study} explores the finite sample properties of the Fej\'er estimators for densities both with and without measurement error, as well as the CDF.
An application to rainfall real and some remarks conclude the main part of this paper. Most of the proofs are relegated to an appendix.

\subsection{Overview over two dual approaches for nonparametric estimation of density and CDF in the linear case} \label{subsec11}

\cite{VI:1975} seems to be the first who used Bernstein polynomials $P_{k,m}(x)=\binom{m}{k}x^k(1-x)^{m-k}$ for density estimation. \cite{BA:2002, LE:2012} studied the corresponding Bernstein CDF estimator
\begin{align}
    \hat{F}_{m,n}^B(x) &= \sum_{k=0}^m F_n\left(\frac{k}{m}\right)  P_{k,m}(x)  \label{Bernstein-CDF1} \\
        &= \frac{1}{n}\sum_{i=1}^n P(Y \geq \lceil mX_i\rceil), \quad x\in [0,1], \nonumber
\end{align}
where $Y$ is a $\text{Bin}(m,x)$-distributed random variable. Noting that $P(Y \geq l)=P(Z\leq x)$ for a random variable $Z$ following a $\text{Beta}(l,m-l+1)$ distribution, we can write this as
\begin{align}
    \hat{F}_{m,n}^B(x) &= \frac{1}{n}\sum_{i=1}^n F(x; \lceil mX_i\rceil, m-\lceil mX_i\rceil+1), \label{Bernstein-CDF2}
\end{align}
where $F(\cdot;\alpha,\beta)$ denotes the CDF of a beta distribution with parameters $\alpha,\beta>0$.
Hence, these estimators use (integrated) beta kernels with expected value close to $X_i$ and variance decreasing to zero if the bandwidth $1/m$ goes to zero, similar to kernel density or CDF estimation. However, the kernels are not symmetric, have support $[0,1]$, and their shape changes according to the value of $X_i$ (but is independent of $x$).
Since the rationale of the Bernstein CDF estimator lies in Bernstein's proof of the Weierstrass approximation theorem, one can expect good performance. In particular, one obtains directly the almost certain uniform convergence of $\hat{F}_{m,n}^B$ to $F$ \citep[Th. 2.1]{BA:2002}.
By differentiating (\ref{Bernstein-CDF2}) with respect to $x$, one obtains the Bernstein density estimator
\begin{align}
    \hat{f}_{m,n}^B(x) &= \frac{1}{n}\sum_{i=1}^n f(x; \lceil mX_i\rceil, m-\lceil mX_i\rceil+1), \label{Bernstein-df}
\end{align}
where $f(\cdot;\alpha,\beta)$ denotes the density of a beta distribution with parameters $\alpha,\beta$. This estimator coincides with the estimator $\tilde{f}_{m,n}$ considered by  \cite{BA:2002} and many others.
\cite{CH:1999} directly used beta kernels for density estimation. However, his approach is different because the kernel shape changes according to the value of $x$ which changes the amount of smoothing applied by the estimators.

Bernstein density and CDF estimators are designed for functions on compact intervals. For positive random variables, the Bernstein polynomials can be replaced by generalized ones using Poisson weights. This is motivated by a uniform approximation property of the corresponding polynomials proved by \cite{SZ:1950}. This method was implemented by \cite{GS:1980} for density estimation and by \cite{HK:2021} for CDF estimation, the latter leading to
\begin{align}
    \hat{F}_{m,n}^S(x) &=\sum_{k=0}^{\infty}F_n\left(\frac{k}{m}\right)\cdot e^{-mx}\frac{(mx)^k}{k!} \nonumber\\
        &=\frac{1}{n}\sum_{i=1}^n P(Y \geq \lceil mX_i\rceil) \nonumber\\
        &= \frac{1}{n}\sum_{i=1}^n P(Z\leq x), \label{Szasz-CDF}
\end{align}
with $Y \sim Poi(mx)$ and  $Z\sim \text{Gamma}(\lceil mX_i\rceil),m)$. 
Again, this approach yields directly the almost certain uniform convergence of $\hat{F}_{m,n}^S$ to $F$ \citep[Th. 3]{HK:2021}.
By differentiating (\ref{Szasz-CDF}), one obtains the corresponding density estimator
\begin{align}
    \hat{f}_{m,n}^S(x) &= \frac{1}{n}\sum_{i=1}^n g(x; \lceil mX_i\rceil, m), \label{Szasz-df}
\end{align}
where $g(\cdot;\alpha,\beta)$ denotes the density of a gamma distribution with shape parameters $\alpha$ and rate $\beta$. This estimator coincides with the estimator considered by  \cite{GS:1980}.
\cite{CH:1999} directly considered density estimation with gamma kernels, applying the kernel $g(X_i; x/h+1, 1/h)$. Again, his approach is different since the kernel shape changes according to the value of $x$. 

\cite{CS:1996} considered smooth estimation of survival and density function. The corresponding CDF estimator is given by 
\begin{align}
\tilde{F}_{n}(x) &= c(n,\lambda_n) \sum_{k=0}^{n} F_n(k/\lambda_n)(\lambda_n x)^k/k!, \label{CS-CDF}\end{align} 
consisting only of a finite sum with weights normalized to one. 
In \cite{CS:1998} the method has been extended to density, survival, hazard and cumulative hazard function estimation for randomly censored data by replacing the edf with the Kaplan-Meier estimator. \cite{CS:1999} proposed a smooth estimator for the mean residual life function in the uncensored case. They show that $\tilde{F}_{n}$ in (\ref{CS-CDF}) is not appropriate for smooth estimation of mean residual life, and give a modification which corresponds to the Szasz estimator above. 
In \cite{CS:2008}, they combine this approach with the Kaplan-Meier estimator yielding a smooth estimator of the mean residual life function based on randomly censored data. 
There also exist several proposals for estimating density and CDF of positive random variables by other asymmetric kernels: \cite{SC:2004} used inverse and reciprocal inverse Gaussian kernels for density estimation, whereas \cite{JK:2003} applied kernels from Birnbaum-Saunders and lognormal distributions.

\section{Circular density estimation based on Fej\'er polynomials} \label{sec:dens-est}

As argued above, the counterparts of Bernstein polynomials for periodic functions are Fej\'er polynomials, and Bernstein's proof of Weierstrass' approximation theorem has Fej\'er's theorem as its counterpart:

\begin{theorem}[Fej\'er, 1915] \label{Fejer-theorem}
Let $f:\R\to\C$ be a continuous function with period $2\pi$. Then, as $m \rightarrow \infty$,
\begin{align*}
  \sigma_m(f;\theta) &= \frac{1}{2\pi} \sum_{k=-m}^m \left(1-\frac{|k|}{m+1}\right) \phi_k \, e^{ik\theta} 
  \ \rightarrow \ f(\theta)
\end{align*}
uniformly for $\theta \in [-\pi,\pi]$, where 
\begin{align*}
   \phi_k &= \int_{-\pi}^{\pi} f(t) \, e^{-ikt} \, dt, \ k\in\Z,
\end{align*}
are the Fourier coefficients.
\end{theorem}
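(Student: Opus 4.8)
The plan is to recognize the Ces\`aro mean $\sigma_m(f;\theta)$ as the convolution of $f$ against the \emph{Fej\'er kernel} and then run the standard approximate-identity argument. First I would substitute the definition of the Fourier coefficients $\phi_k$ into the sum and interchange the finite sum with the integral, obtaining
\begin{align*}
  \sigma_m(f;\theta) &= \frac{1}{2\pi}\int_{-\pi}^{\pi} f(t)\, F_m(\theta - t)\, dt, \qquad F_m(u) = \sum_{k=-m}^m \left(1 - \frac{|k|}{m+1}\right) e^{iku}.
\end{align*}
Since $F_m$ is the Ces\`aro average $\frac{1}{m+1}\sum_{n=0}^m D_n$ of the Dirichlet kernels $D_n(u)=\sum_{k=-n}^n e^{iku}$, a routine summation of geometric-type series collapses this into the closed form $F_m(u) = \frac{1}{m+1}\bigl(\sin((m+1)u/2)/\sin(u/2)\bigr)^2$ for $u \notin 2\pi\Z$.

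The heart of the argument is then to establish three properties of $F_m$ that make it a summability kernel. The closed form shows immediately that $F_m \geq 0$; this nonnegativity is the decisive structural fact, and is exactly what the bare Dirichlet kernel lacks. Integrating term by term (only the $k=0$ term survives) gives the normalization $\frac{1}{2\pi}\int_{-\pi}^{\pi} F_m(u)\,du = 1$. Finally, for the concentration property I would fix $\delta>0$ and note that $\sin^2(u/2) \geq \sin^2(\delta/2)$ on $\delta \leq |u| \leq \pi$, whence $0 \leq F_m(u) \leq \frac{1}{(m+1)\sin^2(\delta/2)} \to 0$ uniformly there.

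With these in hand the proof concludes quickly. Using the normalization I would write
\begin{align*}
  \sigma_m(f;\theta) - f(\theta) &= \frac{1}{2\pi}\int_{-\pi}^{\pi} \bigl(f(\theta - u) - f(\theta)\bigr) F_m(u)\, du,
\end{align*}
and split the range into $|u| < \delta$ and $\delta \leq |u| \leq \pi$. Given $\varepsilon>0$, uniform continuity of $f$ (a continuous $2\pi$-periodic function is uniformly continuous on $\R$) supplies a $\delta$, independent of $\theta$, with $|f(\theta - u) - f(\theta)| < \varepsilon$ on the inner piece; nonnegativity and normalization then bound that contribution by $\varepsilon$. On the outer piece I would bound the bracket crudely by $2\|f\|_\infty$ and invoke the concentration estimate to force the contribution below $\varepsilon$ once $m$ is large. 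Since neither $\delta$, $\|f\|_\infty$, nor the concentration bound depends on $\theta$, the resulting estimate holds uniformly in $\theta$, which is the claim.

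The only genuinely delicate point is the nonnegativity of $F_m$: everything downstream is the textbook Dirac-sequence estimate once the kernel is known to be a nonnegative approximate identity. Thus the main work is the trigonometric identity collapsing the Ces\`aro-averaged Dirichlet kernels into a perfect square, while the periodicity of $f$ is what supplies the uniform continuity delivering uniform, rather than merely pointwise, convergence.
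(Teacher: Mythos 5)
Your proof is correct: it is the classical approximate-identity argument for Fej\'er's theorem, and the paper itself states this result without proof as a known theorem from 1915, so there is no internal proof to diverge from. The ingredients you isolate (the closed form of the kernel as a perfect square, nonnegativity, unit integral, and concentration away from the origin) are precisely the properties the paper records in Remark~\ref{rem-fejer}, so your argument is exactly the standard one the paper implicitly relies on.
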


Let 
\begin{align*}
K_m(s) &= \frac{1}{2\pi} \sum_{k=-m}^m \Big(1-\frac{|k|}{m+1}\Big) e^{iks} 
= \frac{1}{2\pi} + \frac{1}{\pi}\sum_{k=1}^m \Big(1-\frac{k}{m+1}\Big) \cos(ks)
\end{align*}
denote the \emph{Fej\'er kernel of order} $m$. Then,
\begin{align*}
\sigma_m(f;\theta) &= \int_{-\pi}^{\pi} f(t) \, K_m(\theta-t) \, dt 
 = \int_{-\pi}^{\pi} f(\theta-y) \, K_m(y) \, dy.    
\end{align*}

\begin{remark} \label{rem-fejer}
The following properties of Fej\'er kernels are well-known.
\begin{enumerate}
\item It holds that $K_m(0) = (m+1)/(2\pi)$, and 
\begin{align*}
K_m(s) &= \frac{1}{2\pi(m+1)} \left( \frac{\sin\frac{(m+1)s}{2} }{ \sin\frac{s}{2} } \right)^2, 
\; s\neq 0.    
\end{align*}
\item $K_m$ is continuous, symmetric and periodic.
\item $K_m(s)\geq 0$ for all $s$.
\item $\lim_{m\to\infty} \sup_{|s|>\delta} K_m(s)=0$, for all $\delta > 0$.
\item $\int_{-\pi}^{\pi} K_m(s)\,ds=1$ for all $m$.
\item $\int_{-\pi}^{\pi} K_m^2(s)\,ds< \infty$ for all $m$.
\end{enumerate}
\end{remark}

\smallskip
Now, let $X_1, X_2,...$ be a sequence of i.i.d random variables from an (unknown) distribution on the circle with CDF $F$ and density $f$. In analogy to the Bernstein density estimator, a circular density estimator based on a random sample $X_1,...,X_n$ can be based on  Fej\'er's approximation result, just replacing the unknown Fourier coefficients $\phi_k$ by its empirical counterparts
\begin{align*}
   \hat{\phi}_{n,k} &= \frac{1}{n} \sum_{j=1}^n e^{-ikX_j}, \ k\in\Z,
\end{align*}
corresponding to the empirical characteristic function, evaluated at point $k$. Then, the estimator is of the form
\begin{align} \label{dens-est}
  \hat{f}_{m,n}(\theta) &= \frac{1}{2\pi} \sum_{k=-m}^m \left(1-\frac{|k|}{m+1}\right) \hat{\phi}_{n,k} \, e^{ik\theta}. 
\end{align}
Alternatively, it can be written as
\begin{align} 
  \hat{f}_{m,n}(\theta)  &= \frac{1}{2\pi n} \sum_{j=1}^n\sum_{k=-m}^m \left(1-\frac{|k|}{m+1}\right) e^{ik(\theta-X_j)} \nonumber \\
  &= \frac{1}{n} \sum_{j=1}^n K_m(\theta-X_j).  \label{dens-est2}
\end{align}

Clearly, by the properties of $K_m$ in Remark \ref{rem-fejer}, $\hat{f}_{m,n}$ is a proper circular density for each $m\in \N$, i.e. $\hat{f}_{m,n}$ is a non-negative continuous periodic function with $\int_{-\pi}^{\pi}\hat{f}_{m,n}(s)ds=1$.

Our first result shows the uniform strong consistency of $\hat{f}_{m,n}$. 

\begin{theorem} \label{consitency-f}
  Assume that $m=m(n)=n^\gamma$ for some $\gamma\in (0,1/2)$.
  If $f$ is a continuous periodic circular density on $[-\pi,\pi)$, then
  \begin{equation*}
    \left\|\hat{f}_{m,n}-f\right\|\rightarrow 0 \enspace \text{a.s.}
  \end{equation*}
  for $n \rightarrow \infty$, where $\|g\|=\sup_{x\in[-\pi,\pi]}|g(x)|$ for a bounded function $g$ on $[-\pi,\pi]$.
\end{theorem}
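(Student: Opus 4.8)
The plan is to split the error into a deterministic bias term and a stochastic fluctuation term,
\[
\hat f_{m,n}(\theta)-f(\theta)=\big(\Ex\hat f_{m,n}(\theta)-f(\theta)\big)+\big(\hat f_{m,n}(\theta)-\Ex\hat f_{m,n}(\theta)\big),
\]
and treat the two pieces by completely different means. For the bias term I would simply note that $\Ex\hat f_{m,n}(\theta)=\Ex K_m(\theta-X_1)=\int_{-\pi}^{\pi}f(x)K_m(\theta-x)\,dx=\sigma_m(f;\theta)$, so that Fej\'er's theorem (Theorem \ref{Fejer-theorem}) delivers $\|\Ex\hat f_{m,n}-f\|\to0$ \emph{deterministically} as the order $m\to\infty$. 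Since $m=m(n)=n^\gamma\to\infty$, this term vanishes and needs no probabilistic input; the continuity of $f$ and periodicity are used exactly here.

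The stochastic term is the main obstacle, because the Fej\'er kernel is unbounded ($K_m(0)=(m+1)/(2\pi)\to\infty$), so the summands $K_m(\theta-X_j)$ are not uniformly bounded and a naive uniform law of large numbers over the continuum $[-\pi,\pi]$ is awkward. The device that dissolves this difficulty is to pass to the Fourier side. Since $\Ex\hat\phi_{n,k}=\phi_k$, one has
\[
\hat f_{m,n}(\theta)-\Ex\hat f_{m,n}(\theta)=\frac1{2\pi}\sum_{k=-m}^m\Big(1-\tfrac{|k|}{m+1}\Big)(\hat\phi_{n,k}-\phi_k)\,e^{ik\theta},
\]
and because the Fej\'er weights lie in $[0,1]$ and $|e^{ik\theta}|=1$, this yields the $\theta$-free bound
\[
\big\|\hat f_{m,n}-\Ex\hat f_{m,n}\big\|\le\frac1{2\pi}\sum_{k=-m}^m|\hat\phi_{n,k}-\phi_k|.
\]
Thus a supremum over a continuum is reduced to controlling only $2m+1$ empirical characteristic function deviations.

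To finish I would invoke a concentration inequality. Each $\hat\phi_{n,k}-\phi_k=\frac1n\sum_{j=1}^n(e^{-ikX_j}-\Ex e^{-ikX_j})$ is a centered average of i.i.d.\ terms; splitting into real and imaginary parts (each valued in $[-1,1]$) and applying Hoeffding's inequality gives a sub-Gaussian tail $\PP(|\hat\phi_{n,k}-\phi_k|>\epsilon)\le C e^{-cn\epsilon^2}$ with $C,c$ independent of $k$. A union bound then gives, for fixed $\delta>0$,
\[
\PP\Big(\sum_{k=-m}^m|\hat\phi_{n,k}-\phi_k|>\delta\Big)\le(2m+1)\,C\exp\!\Big(-\frac{cn\delta^2}{(2m+1)^2}\Big).
\]
With $m=n^\gamma$ the exponent is of order $-c'\delta^2 n^{1-2\gamma}$, and this is precisely where the hypothesis $\gamma<1/2$ enters: $1-2\gamma>0$ forces $n^{1-2\gamma}\to\infty$, so the exponential decay overwhelms the $n^\gamma$ prefactor and the probabilities are summable in $n$. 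The Borel--Cantelli lemma then gives $\sum_{k=-m}^m|\hat\phi_{n,k}-\phi_k|\to0$ a.s., hence $\|\hat f_{m,n}-\Ex\hat f_{m,n}\|\to0$ a.s. Combining this with the deterministic bias estimate completes the proof. The constraint $\gamma<1/2$ appears sharp for this argument, since at $\gamma=1/2$ the exponent ceases to grow.
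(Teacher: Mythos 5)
Your proof is correct, and it shares the paper's overall architecture: you split off the bias term $\Ex\hat f_{m,n}-f=\sigma_m(f)-f$ (the paper writes the same decomposition directly in terms of $\sigma_m(f)$), dispose of it with Fej\'er's theorem, and reduce the stochastic term to controlling the $2m+1$ deviations $\hat\phi_{n,k}-\phi_k$ of the empirical Fourier coefficients. Where you genuinely diverge is in how that last step is handled. The paper invokes a result of Cs\"org\H{o} (Example 1 in the cited 1985 paper) giving the almost sure uniform rate $\sup_{|t|<n^\gamma}|\psi_n(t)-\psi(t)|=O(\sqrt{\log n/n})$ for the empirical characteristic function over an expanding \emph{continuum} of arguments, which immediately yields the explicit rate $\|\hat f_{m,n}-\sigma_m(f)\|=O(n^{\gamma-1/2}(\log n)^{1/2})$. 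You instead exploit the fact that only the integer frequencies $|k|\le m$ matter, so a Hoeffding bound on real and imaginary parts, a union bound over $2m+1$ indices, and Borel--Cantelli suffice; the condition $\gamma<1/2$ enters in exactly the same place in both arguments (it is what makes $n^{1-2\gamma}\to\infty$, respectively what makes $n^{\gamma-1/2}(\log n)^{1/2}\to0$). Your route is more elementary and self-contained, avoiding any external theorem about empirical characteristic functions and any tail condition on the underlying distribution; the paper's route is shorter on the page and delivers an explicit almost sure rate rather than bare convergence (though you could recover a comparable rate by letting $\delta=\delta_n$ shrink suitably in your union bound). Both are valid proofs of the stated theorem.
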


\begin{proof}
Writing $\sigma_m(f)=\sigma_m(f;\cdot)$, we have
\begin{equation*}
  \left\|\hat{f}_{m,n}-f\right\| \leq \left\|\hat{f}_{m,n}-\sigma_m(f)\right\| + \|\sigma_m(f)-f\|.
\end{equation*}
Clearly, $\sigma_m(f)\to f$ uniformly in $\theta$ by Theorem \ref{Fejer-theorem}.
For the first summand, note that 
\begin{align*}
  \left|\hat{f}_{m,n}(\theta) - \sigma_m(f; \theta)\right| 
    &= \left| \frac{1}{2\pi} \sum_{k=-m}^m \left(1-\frac{|k|}{m+1}\right) \left( \hat{\phi}_{n,k} - \phi_k \right) \, e^{ik\theta} \right| \\
    &\leq m \cdot \, \sup_{|k|\leq m}  \left| \hat{\phi}_{n,k} - \phi_k \right|.
\end{align*}
By Example 1 in \cite{CS:1985}, we obtain for any $\gamma>0$
\begin{align*}
\limsup_{n\to\infty} \sqrt{\frac{n}{\log n}} \ \sup_{|t|<n^\gamma} \left|\psi_n(t)-\psi(t)\right| \leq c, \quad a.s.,    
\end{align*}
where $c$ is a positive constant, and $\psi_n$ and $\psi$ denote the empirical c.f. and c.f. of a distribution with a tail decreasing at least like a power function. 
Since $\hat{\phi}_{n,k}$ and $\phi_k$ correspond to $\psi_n$ and $\psi$, evaluated at the integers, from a distribution with compact support $[-\pi,\pi]$, this result 
implies for $m=n^\gamma$ that 
\begin{align*}
  \left\| \hat{f}_{m,n} - \sigma_m(f)\right\| &= O\left( n^{\gamma-1/2} \left(\log n\right)^{1/2} \right) = o(1).
\end{align*}
Therefore, the assertion follows. 
\end{proof}

Figure \ref{fig2} shows the Fej\'er estimator for random samples of size $n=50$ and $400$, respectively, from the wrapped Cauchy distribution with $\mu=\pi/2,\, \rho=\exp(-1)$, and different values of the order $m$, which takes on the role of the (inverse) bandwidth in kernel density estimation.

\begin{figure}
  \centering
 \includegraphics[width=\textwidth]{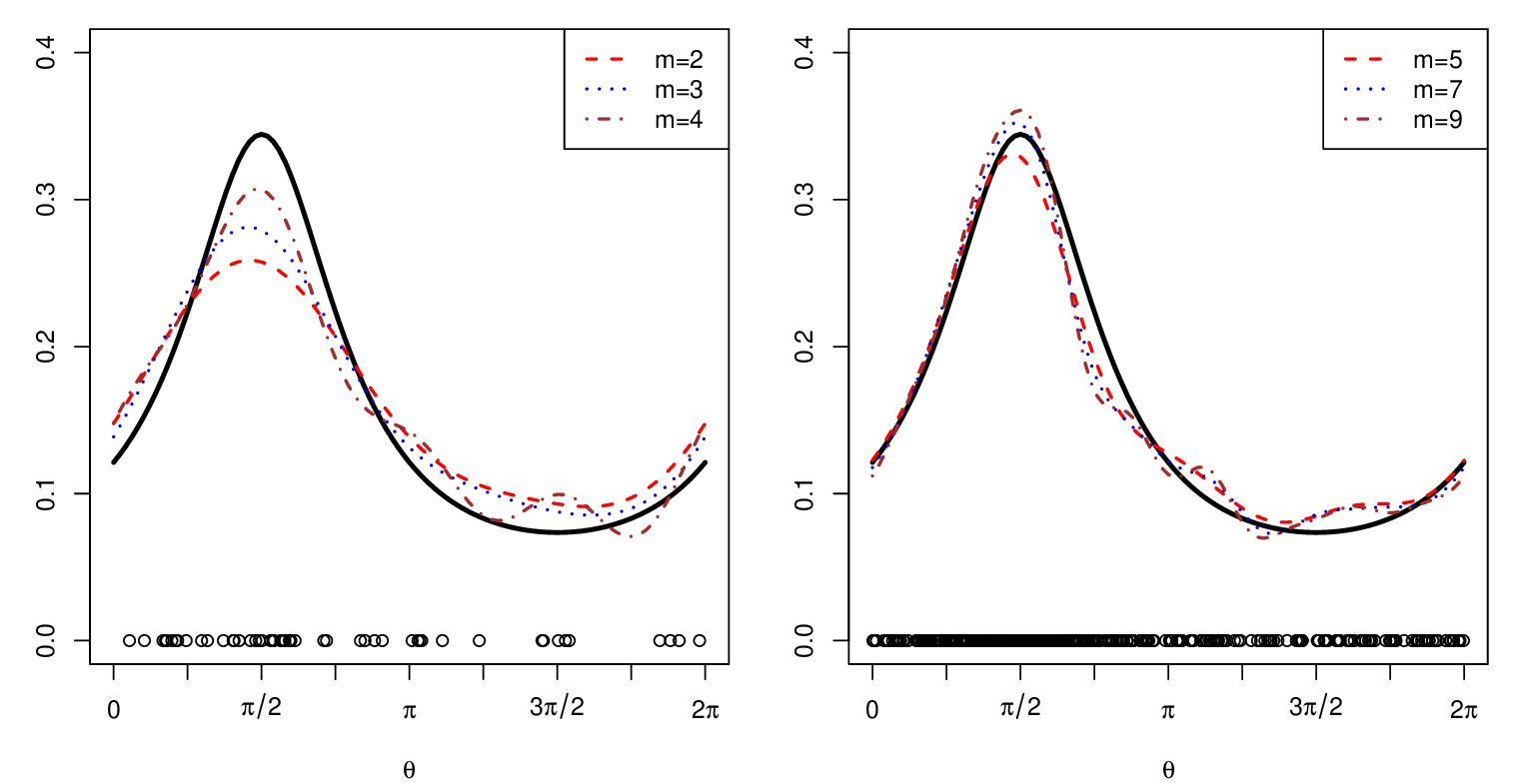}
\caption{Fej\'er density estimator for samples of size $50$ (left) and $400$ (right) from a wrapped Cauchy distribution with different values of the bandwidth $m$. \label{fig2}}
\end{figure}
\subsection{Asymptotic expansions for the Fej\'er density estimator} \label{sec:asym-exp}

\cite{TE:2022} introduced the concept of a delta sequence estimator in the circular case, that is an estimator of the form $\hat{f}_n(\theta)=1/n \sum_{i=1}^n \delta_n(\theta-X_i)$, where $\delta_n:\R\to [0,\infty)$ is a sequence of $2\pi$-periodic functions satisfying three conditions $(\Delta.1)-(\Delta.3)$ which correspond to properties (iv)-(vi) in Remark \ref{rem-fejer}. Hence, assuming that $m=m(n)\to \infty$ when $n\to \infty$, the Fej\'er density estimator belongs to this class.
By the symmetry of $K_m$, condition $(\Delta.4)$ in \cite[p. 382]{TE:2022} is also fulfilled.
Further, we have 
\begin{align} \label{alpha}
    \alpha(K_m) &= \int_{-\pi}^{\pi} K_m^2(y)dy  \nonumber \\
    &= \int_{-\pi}^{\pi} \left( \frac{1}{\pi^2} + \frac{1}{4\pi^2} \sum_{k=1}^m \left(1-\frac{k}{m+1}\right)^2 \cos^2(ky) \right) dy \nonumber \\
    &= \frac{1}{2\pi} + \frac{1}{\pi} \sum_{k=1}^m \left(1-\frac{k}{m+1}\right)^2 
    \,=\, \frac{1}{2\pi} + \frac{m(2m+1)}{6\pi(m+1)} 
    \,\sim \, \frac{m}{3\pi}.
\end{align}
Assume that $m\to\infty$ and $n/m\to \infty$, as $n\to\infty$. Then, $\alpha(K_m)\to\infty$ and $n\alpha(K_m)^{-1} \to\infty$, as $n\to\infty$. Consequently, Theorem 2.1 and formula (12) in \cite{TE:2022} hold. Utilizing (\ref{alpha}), we obtain for a continuous density $f$ on $[-\pi,\pi]$
\begin{align*}
\sup_{\theta\in[-\pi,\pi]}& \left| E\hat{f}_{m,n}(\theta)-f(\theta) \right| \to 0,  \\
\sup_{\theta\in[-\pi,\pi]}& \left| \frac{3\pi n}{m} Var\left(\hat{f}_{m,n}(\theta)\right) - f(\theta) \right| \to 0,  
\end{align*}
as $n\to\infty$. It follows that 
\begin{align*}
\sup_{\theta\in[-\pi,\pi]} \left| Var\left(\hat{f}_{m,n}(\theta)\right) - \frac{m}{3\pi n}  f(\theta) \right| = o\left(\frac{m}{n}\right), \quad  \text{as } n\to\infty. 
\end{align*}
Then, the integrated variance can be expressed as 
\begin{align*}
IV(f;\hat{f}_{m,n}) &= \int_{-\pi}^{\pi} Var\left(\hat{f}_{m,n}(\theta)\right) d\theta 
= \frac{m}{3\pi n} + o\left(\frac{m}{n}\right), \quad  \text{as } n\to\infty. 
\end{align*}
Next, we want to derive an expansion for the integrated squared bias. Define
\begin{align*}
\beta(K_m) &= \int_{-\pi}^{\pi} y^2 K_m(y)dy, \quad 
\gamma(K_m) = \int_{-\pi}^{\pi} |y|^3 K_m(y) dy.
\end{align*} 
Proposition \ref{prop-moments} in Appendix \ref{app:A} gives the exact asymptotic behavior of $\beta(K_m)$ and $\gamma(K_m)$. It follows that $\beta(K_m)/\gamma(K_m)\to c >0$ when $m\to\infty$. 
Thus, the condition $(\Delta.5)$ for delta sequence estimators assumed in \cite{TE:2022} is not satisfied by $K_m$, and Theorems 2.2 and 3.1 in \cite{TE:2022} cannot be applied. Similarly, the approximation of the mean integrated squared error in \citet[Theorem 1]{DM:2009} is invalid for the Fej\'er kernel, even though this kernel is mentioned as an example in that paper. The reason is as follows: the assumptions and results of \citet{DM:2009} are collected in Result 1 of \citet{DM:2022}. For the Fej\'er kernel, with $\gamma_k(m)=\ind_{k\leq m} \left(1-k/(m+1)\right)$, 
\begin{align*}
    \lim_{m \to \infty} \frac{1-\gamma_k(m)}{1-\gamma_2(m)} &= \frac{k}{2}.
\end{align*}
Therefore, assumption (ii) in Result 1 is not satisfied, and the asymptotic expansion for the bias in Theorem 1 of \citet{DM:2009} does not apply to the Fej\'er kernel.

Instead, we can proceed as \cite{TE:2022} in the case of the wrapped Cauchy kernel. The crucial point is the representation of the integrated squared bias as 
\begin{align*}
ISB(f;\hat{f}_{m,n}) &= \int_{-\pi}^{\pi} \left(E\hat{f}_{m,n}(\theta) - f(\theta) \right)^2 d\theta \\
&= \frac{1}{\pi} \sum_{k=1}^\infty \left(1-\gamma_k(m)\right)^2 \left(a_k^2+b_k^2\right)
\end{align*}
for a square-integrable function $f$ in $[-\pi,\pi]$, where the trigonometric
moments $a_k$ and $b_k$ are given by 
$a_k=\int_{-\pi}^{\pi} f(\theta)\cos(k\theta)d\theta$ and $b_k=\int_{-\pi}^{\pi} f(\theta)\sin(k\theta)d\theta$. For a function $f$ that is absolutely continuous on $[-\pi,\pi]$ and has a square-integrable derivative $f'$ on $[-\pi,\pi]$, we obtain  
\begin{align}
ISB(f;\hat{f}_{m,n}) \nonumber
&= \frac{1}{\pi(m+1)^2} \sum_{k=1}^m k^2 \left(a_k^2+b_k^2\right)
 + \frac{1}{\pi} \sum_{k=m+1}^\infty \left(a_k^2+b_k^2\right)\\\nonumber
&= \frac{\theta_1(f)}{(m+1)^2} + \frac{1}{\pi} \sum_{k=m+1}^\infty \left( 1-\frac{k^2}{(m+1)^2} \right) \left(a_k^2+b_k^2\right) \\
&= \theta_1(f)/(m+1)^2 + o((m+1)^{-2}),\label{ISB}
\end{align}
where 
\begin{align} \label{theta1}
\theta_1(f) & = \frac{1}{\pi} \sum_{k=1}^\infty k^2 \left(a_k^2+b_k^2\right) 
= \int_{-\pi}^{\pi} f'(\theta)^2 d\theta
\end{align}
(see the proof of Theorem 3.2 in \cite{TE:2022}). Summarizing, we have the following result.

\begin{theorem}
Let $f$ be continuously differentiable on $[-\pi,\pi]$. If $m\to\infty$ and $n/m\to \infty$, as $n\to\infty$, then 
\begin{align*}
\MISE(f;\hat{f}_{m,n}) &= \frac{m}{3\pi n} + \frac{\theta_1(f)}{m^2} + o\left(\frac{m}{n}+ \frac{1}{m^2}\right),      
\end{align*}
where $\theta_1(f)= \int_{-\pi}^{\pi} f'(\theta)^2 d\theta$.
\end{theorem}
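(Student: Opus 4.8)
The plan is to combine the two ingredients already established in this subsection via the standard bias--variance decomposition of the mean integrated squared error. First I would write, for each fixed $\theta$,
\[
E\left[\left(\hat{f}_{m,n}(\theta)-f(\theta)\right)^2\right] = \Var\left(\hat{f}_{m,n}(\theta)\right) + \left(E\hat{f}_{m,n}(\theta)-f(\theta)\right)^2,
\]
and integrate over $[-\pi,\pi]$ to obtain $\MISE(f;\hat{f}_{m,n}) = \IV(f;\hat{f}_{m,n}) + \ISB(f;\hat{f}_{m,n})$. This splits the problem into exactly the two pieces handled above.

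Next I would invoke the integrated-variance expansion derived from Theorem 2.1 and formula (12) of \cite{TE:2022}, namely $\IV(f;\hat{f}_{m,n}) = \frac{m}{3\pi n} + o(m/n)$, which is valid under the stated regime $m\to\infty$, $n/m\to\infty$. For the bias term I would use the expansion \eqref{ISB}, valid because $f$ is continuously differentiable (hence absolutely continuous with square-integrable derivative), giving $\ISB(f;\hat{f}_{m,n}) = \theta_1(f)/(m+1)^2 + o((m+1)^{-2})$ with $\theta_1(f)=\int_{-\pi}^\pi f'(\theta)^2\,d\theta$ as in \eqref{theta1}.

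The only genuine manipulation is to rewrite the $(m+1)^{-2}$ leading bias term in terms of $m^{-2}$. I would expand
\[
\frac{1}{(m+1)^2} = \frac{1}{m^2}\left(1-\frac{1}{m+1}\right)^2 = \frac{1}{m^2} + O\!\left(\frac{1}{m^3}\right),
\]
so that $\theta_1(f)/(m+1)^2 = \theta_1(f)/m^2 + o(1/m^2)$, and likewise $o((m+1)^{-2}) = o(1/m^2)$ since $(m+1)^{-2}\sim m^{-2}$. Adding the two expansions and collecting the remainders yields the error term $o(m/n) + o(1/m^2) = o(m/n + 1/m^2)$, which is precisely the claimed bound.

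There is no real obstacle here: both the variance and the bias asymptotics have been established prior to the statement, and what remains is a routine bookkeeping of leading terms together with the elementary replacement of $(m+1)^2$ by $m^2$ inside the remainder. The one point requiring a little care is confirming that the two remainder terms, of different orders $o(m/n)$ and $o(1/m^2)$, legitimately combine into the single symmetric form $o(m/n + 1/m^2)$, which follows because each is $o$ of a summand of the bracket.
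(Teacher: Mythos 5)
Your proposal is correct and follows essentially the same route as the paper: the theorem there is explicitly a summary of the two expansions derived just before it — the integrated variance $\frac{m}{3\pi n}+o(m/n)$ obtained from Theorem 2.1 and formula (12) of \cite{TE:2022} via $\alpha(K_m)\sim m/(3\pi)$, and the integrated squared bias $\theta_1(f)/(m+1)^2+o((m+1)^{-2})$ from \eqref{ISB} — combined through the bias--variance decomposition, with the same harmless replacement of $(m+1)^{-2}$ by $m^{-2}$.
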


If $f$ is not the circular uniform distribution, the asymptotic optimal choice of $m$ is given by 
\begin{align} \label{optimal-m}
m_{opt} &= (6\pi\theta_1(f))^{1/3} \, n^{1/3},
\end{align}
leading to the asymptotic mean integrated squared error
\begin{align*}
\AMISE(f;\hat{f}_{m_{opt},n}) &= c_F \left( \frac{\theta_1(f)}{\pi^2} \right)^{1/3} n^{-2/3},
\quad c_F = \left(\frac{2}{9}\right)^{1/3} + \left(\frac{1}{6}\right)^{2/3}.
\end{align*}
The AMISE for the estimator based on the wrapped Cauchy kernel given in \cite{TE:2022} has the same form, with the constant $c_F\approx 0.91$ replaced by the slightly larger constant 
$c_{WC}=2^{-1/3} + 4^{-2/3}\approx 1.19$.

\subsection{Plug-in bandwidth selection} \label{sec:bandwith-select}

The only unknown quantity in the asymptotic optimal choice of $m$ in (\ref{optimal-m}) is  $\theta_1(f)$. The simplest approach assumes that $f$ is a member of some parametric family of distributions, such as the von Mises family. After estimating the unknown parameters, direct computation of  $\theta_1(f)$ is possible. This does not lead to the optimal choice of $m$ if the data does not come from the assumed distribution.
A more general way for estimating $\theta_1(f)$ uses the representation in (\ref{theta1}). One possible approach estimates $a_k$ and $b_k$ by
\begin{align*}
    \hat{a}_k&= \frac{1}{n} \sum_{i=1}^n \cos(kX_i) \quad \text{and} \quad 
    \hat{b}_k= \frac{1}{n} \sum_{i=1}^n \sin(kX_i),
\end{align*}
leading to 
\begin{align} \label{plugin-first-derivative}
\tilde{\theta}_{1,M} & = \frac{1}{\pi} \sum_{k=1}^M k^2 \left(\hat{a}_k^2+\hat{b}_k^2\right),
\end{align}
where $M=M(n)$ converges to infinity.
Alternatively, one can use $\hat{\theta}_{1,M}=1/\pi \sum_{k=1}^M k^2 \hat{c}_k$, where $\hat{c}_k$ is the unbiased estimator of $a_k^2+b_k^2$ given by \citet[p. 390]{TE:2022} 
\begin{align*} 
\hat{c}_k & = \frac{2}{n(n-1)} \sum_{1\leq i,j\leq n} \cos\left(k(X_i-X_j)\right).
\end{align*}
In practice, the choice of $M$ should be data-dependent, i.e. $M=\hat{M}(X_1,\ldots,X_n)$. From Lemma 1 in \citet{TE:2011}, one can deduce that if $\hat{M}$ is such that $\hat{M}\cp\infty$ and $n^{-1}\hat{M}^3\cp 0$, then $\hat{\theta}_{1,\hat{M}}\cp\theta_1$.
Summarizing, we obtain
\begin{align*}
    \frac{\big(6\pi\hat{\theta}_{1,\hat{M}}\big)^{1/3} \, n^{1/3}}{m_{opt}} &\cp 1.
\end{align*}

\section{Distribution function estimation based on Fej\'er polynomials}\label{sec:CDF-est}

Associated with the circular kernel $K_m$ is the integrated kernel 
\begin{align} \label{int-kernel}
    W_m(\theta) &= \int_{-\pi}^{\theta} K_m(y)dy 
    = \frac{\theta+\pi}{2\pi} + \frac{1}{\pi} \sum_{k=1}^m \left(1-\frac{k}{m+1}\right) \frac{1}{k} \sin(k\theta).
\end{align}
A possible estimator of a circular CDF $F(\theta)=\int_{-\pi}^{\theta} f(y)dy$ is obtained by integrating over the density estimator in (\ref{dens-est}), leading to 
\begin{align*}
  \hat{F}_{m,n}(\theta) &= \int_{-\pi}^{\theta} \hat{f}_{m,n}(y)dy \\
 &= \frac{1}{n} \sum_{i=1}^n \left\{ W_m(\theta-X_i) - W_m(-\pi-X_i) \right\}.
\end{align*} 
Note that the estimator depends on the lower limit of the integration $\theta_0$. Although it seems natural to use $\theta_0=-\pi$, the discussion in \cite{DM:2012} shows that this is not an optimal choice, in general. We will discuss this in more detail in Section \ref{sec:low-lim}.
Up to this point, we always use $\theta_0=-\pi$.

\smallskip
The next theorem shows that $\hat{F}_{m,n}$ is uniformly strongly consistent as long as $m$ does not increase faster than a power of $n$. 

\begin{theorem}
  Assume that $m=m(n)=n^\gamma$ for some $\gamma>0$.
  If $F$ is a continuous distribution function on $[-\pi,\pi)$, then
  \begin{equation*}
    \left\|\hat{F}_{m,n}-F\right\|\rightarrow 0 \enspace \text{a.s.}
  \end{equation*}
  for $n \rightarrow \infty$.
\end{theorem}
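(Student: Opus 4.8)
The plan is to mirror the proof of Theorem~\ref{consitency-f}, splitting the error through the deterministic mean $\bar{F}_m := E\hat{F}_{m,n}$ by the triangle inequality
\begin{equation*}
\left\|\hat{F}_{m,n}-F\right\| \leq \left\|\hat{F}_{m,n}-\bar{F}_m\right\| + \left\|\bar{F}_m-F\right\|.
\end{equation*}
The structural observation that drives everything is that integrating the density estimator introduces a factor $1/k$ in the $k$-th Fourier term of $W_m$. Consequently the stochastic fluctuation will be controlled by $\sum_{k=1}^m 1/k = O(\log m)$ instead of by the factor $m$ that governed the density case, and this is exactly what relaxes the restriction $\gamma<1/2$ to an arbitrary $\gamma>0$.

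For the deterministic (bias) term the difficulty is that $F$ is \emph{not} $2\pi$-periodic, so Fej\'er's theorem cannot be applied to it directly. I would first remove the linear trend by writing $F(\theta) = (\theta+\pi)/(2\pi) + G(\theta)$, where $G(\theta):=F(\theta)-(\theta+\pi)/(2\pi)$ satisfies $G(-\pi)=G(\pi)=0$ and is therefore continuous and $2\pi$-periodic. Integrating $\bar{f}_m(y)=E[K_m(y-X)]=\frac{1}{2\pi}\sum_{|k|\le m}(1-\tfrac{|k|}{m+1})\phi_k e^{iky}$ term by term from $-\pi$ to $\theta$, and using $\int_{-\pi}^{\pi} e^{-ikt}\,dF(t)=\phi_k$ together with $\phi_0=1$, one identifies the mean as
\begin{equation*}
\bar{F}_m(\theta) = \frac{\theta+\pi}{2\pi} + \sigma_m(G;\theta) - \sigma_m(G;-\pi),
\end{equation*}
the constant $\sigma_m(G;-\pi)$ being the boundary correction produced by the lower limit $-\pi$. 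Since $G$ is continuous and periodic, Theorem~\ref{Fejer-theorem} gives $\sigma_m(G;\theta)\to G(\theta)$ uniformly in $\theta$ and $\sigma_m(G;-\pi)\to G(-\pi)=0$, whence $\bar{F}_m\to F$ uniformly.

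For the stochastic term I would write the difference in Fourier form,
\begin{equation*}
\hat{F}_{m,n}(\theta)-\bar{F}_m(\theta) = \frac{1}{2\pi}\sum_{0<|k|\leq m}\Big(1-\frac{|k|}{m+1}\Big)\frac{1}{ik}\big(\hat{\phi}_{n,k}-\phi_k\big)\big(e^{ik\theta}-(-1)^k\big),
\end{equation*}
in which the linear terms have cancelled, and bound it, using $|e^{ik\theta}-(-1)^k|\leq 2$ and the conjugate symmetry $|\hat{\phi}_{n,-k}-\phi_{-k}|=|\hat{\phi}_{n,k}-\phi_k|$, by
\begin{equation*}
\left\|\hat{F}_{m,n}-\bar{F}_m\right\| \leq \frac{2}{\pi}\Big(\sum_{k=1}^m \frac1k\Big)\max_{1\leq k\leq m}\big|\hat{\phi}_{n,k}-\phi_k\big| = O(\log m)\,\sup_{|t|<m}\big|\psi_n(t)-\psi(t)\big|.
\end{equation*}
Exactly as in Theorem~\ref{consitency-f}, Example~1 in \cite{CS:1985} yields $\sup_{|t|<n^\gamma}|\psi_n(t)-\psi(t)| = O(\sqrt{\log n/n})$ a.s., so with $m=n^\gamma$ the right-hand side is $O\big((\log n)^{3/2} n^{-1/2}\big)=o(1)$ a.s.\ for \emph{every} $\gamma>0$. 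Combining the two bounds gives the assertion.

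I expect the main obstacle to be the deterministic step rather than the stochastic one: correctly isolating the linear trend, verifying the identity for $\bar{F}_m$, and checking that the boundary-correction term $\sigma_m(G;-\pi)$ vanishes so that Fej\'er's theorem is applied to a genuinely continuous periodic function. (One can sanity-check the identity at the endpoints, where $\bar{F}_m(-\pi)=0$ and $\bar{F}_m(\pi)=1$ follow from periodicity of $\sigma_m(G;\cdot)$ and property~(v) of $K_m$.) Once this is settled, the stochastic part is a routine strengthening of the density argument, the $\log m$ gain from integration being precisely what makes all $\gamma>0$ admissible.
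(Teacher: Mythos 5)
Your proof is correct, and the stochastic half is essentially identical to the paper's: the same decomposition through the (Fej\'er-smoothed) mean, the same observation that integrating $e^{iky}$ produces a factor $1/(ik)$ so that the fluctuation is controlled by $H_m=O(\log m)$ rather than $m$, and the same appeal to Example~1 of \cite{CS:1985}, yielding $O((\log n)^{3/2}n^{-1/2})$ a.s.\ for every $\gamma>0$. Where you differ is the bias term. The paper simply notes that $\sigma_m(F;\theta)-F(\theta)=\int_{-\pi}^{\theta}(\sigma_m(f;y)-f(y))\,dy$ and integrates the uniform convergence $\sigma_m(f)\to f$ given by Fej\'er's theorem applied to the \emph{density}; this is shorter but implicitly uses continuity of $f$, which is stronger than the stated hypothesis that $F$ is continuous. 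You instead de-trend the CDF, write $F=(\theta+\pi)/(2\pi)+G$ with $G$ continuous and $2\pi$-periodic, verify the identity $\bar{F}_m(\theta)=(\theta+\pi)/(2\pi)+\sigma_m(G;\theta)-\sigma_m(G;-\pi)$ (which checks out, since the Fourier coefficients of $G$ are $\phi_k/(ik)$ for $k\neq 0$), and apply Fej\'er's theorem to $G$ itself. This costs a little extra bookkeeping with the boundary term at $-\pi$ but buys a proof that matches the theorem's actual assumption on $F$; both routes are valid.
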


\begin{proof}
First,
\begin{equation*}
  \left\|\hat{F}_{m,n}-F\right\| \leq \left\|\hat{F}_{m,n}-\sigma_m(F)\right\| + \|\sigma_m(F)-F\|.
\end{equation*}
Since $\sigma_m(f)\to f$ uniformly in $\theta$ by Theorem \ref{Fejer-theorem}, it follows that \
\begin{align*}
    |\sigma_m(F;\theta)-F(\theta)| &= \left| \int_{-\pi}^{\theta} \left( \sigma_m(f;y)-f(y) \right) dy \right| \leq 2\pi\epsilon
\end{align*}
for $\epsilon>0$ and $m$ large enough; hence $\|\sigma_m(F)-F\|\to 0$ as $m\to \infty$.
For the first summand, note that 
\begin{align*}
  \left|\hat{F}_{m,n}(\theta) - \sigma_m(F; \theta)\right| 
    &= \left| \int_{-\pi}^{\theta} \frac{1}{2\pi} \sum_{k=-m}^m \left(1-\frac{|k|}{m+1}\right) \left( \hat{\phi}_{n,k} - \phi_k \right) e^{iky} dy \right| \\
    &\leq \frac{1}{\pi} \sum_{k=1}^m \left(1-\frac{k}{m+1}\right) \frac{2}{k} \, \cdot \, \sup_{|k|\leq m}  \left| \hat{\phi}_{n,k} - \phi_k \right| \\
    &\leq H_m \, \sup_{|k|\leq m} \left| \hat{\phi}_{n,k} - \phi_k \right|,
  \end{align*}
where $H_m$ is defined in Lemma \ref{lemma-harmonic}. 
Here, $H_m \sim \log m=\gamma \log n$ for $m=n^\gamma$. 
Again using Example 1 in \cite{CS:1985} as in the proof of Theorem \ref{consitency-f}, we obtain $|\hat{F}_{m,n}-\sigma_m(F)|\to 0$ as $n\to \infty$ and $m=n^{\gamma}$, for any $\gamma>0$.
\end{proof}

\subsection{Asymptotic expansions for the Fej\'er CDF estimator} \label{sec:Fejer-CDF}

Defining formally the concentration parameter $\rho=1-1/m\in[0,1)$, the Fej\'er Kernels $K_m$ fulfils assumptions (K1), (K2), (K3) and (K10) in \cite{AG:2024}.
Assuming again that $m\to \infty$ and $m/n\to 0$, as $n\to\infty$, condition (P1) and, hence, conditions
(C1)–(C3) are fulfilled.  

Next, define $\nu_{1,m} = 2\pi \int_{-\pi}^{\pi} y \cdot W_m(y) \cdot K_m(y) \, dy,$
which corresponds to $m_{1;s}(\rho_s)$ in equation (7) of \cite{AG:2024}.
Lemma \ref{lemma-m1} in Appendix \ref{app:B} gives the asymptotic behavior of $\nu_{1,m}$, showing that
\[
\nu_{1,m} = \frac{2 \log m}{m+1} + O(m^{-1}).
\]
Further, define 
\begin{align*}
\gamma_{1,m}(\theta) &= \theta + 2 \gamma_{m}(\theta), \\
\gamma_{2,m}(\theta) &= \frac{1}{2} + \frac{1}{\pi} \left( -\sum_{k=1}^{m} \left(\frac{1}{k}-\frac{1}{m+1}\right) \sin(k\theta) + \gamma_{m}(\theta) \right),
\end{align*}
where
\[
\gamma_{m}(\theta) = \sum_{k=1}^{m} (-1)^k \left(1-\frac{k}{m+1}\right)^2 \frac{1}{k} \sin(k\theta).
\]
Then, by Lemma \ref{lemma-gamma1-2},
\[
\gamma_{1,m}(\theta) = O(m^{-1})=o(\nu_{1,m}) \quad \text{and} \quad  \gamma_{2,m}(\theta) = O(m^{-1}) = o(\nu_{1,m}),
\]
as $m\to\infty$. Accordingly, conditions (P2) and, hence, (K5) in \cite{AG:2024} are fulfilled.
Finally, Lemma \ref{lemma-m3} shows that 
\[
\nu_{3,m} = \int_{-\pi}^{\pi} y^3 \cdot W_m(y) \cdot K_m(y) \, dy = O(m^{-1}) = o(\nu_{1,m}).
\]
Thus, the condition (K7) in \cite{AG:2024} is also satisfied. Under the assumption
\[
\text{(F) \quad The function $F$ is twice continuously differentiable on $[-\pi,\pi)$},
\]
their Th. 2.2(v) yields an explicit expression for the asymptotic variance:
\begin{align*}
\AVar\left(\hat{F}_{m,n}(\theta)\right) 
&=\frac{F(\theta)(1-F(\theta))}{n} - \frac{1}{\pi n} \nu_{1,m} \left( F'(\theta) + F'(-\pi) \right). 
\end{align*}
Again using Lemma \ref{lemma-m1}, we obtain
\begin{align} \label{AVar-F}
\AVar\left(\hat{F}_{m,n}(\theta)\right) 
&=\frac{F(\theta)(1-F(\theta))}{n} - \frac{2\log m}{\pi mn} \left( F'(\theta) + F'(-\pi) \right). 
\end{align}
We will now focus on the integrated squared bias 
\[
\ISB(F;\hat{F}_{m,n}) = \int_{-\pi}^{\pi} \left( E\hat{F}_{m,n} - F(\theta) \right)^2 d\theta.
\]
By Lemma \ref{prop-high-mom} in the appendix, 
\begin{align*}
m_4(K_m) &= \int_{-\pi}^{\pi} y^4 K_m(y)dy \ = \frac{8\pi^2\log(2)-36\zeta(3)}{m} + O(m^{-2}).
\end{align*} 
It follows that assumption (K6) in \citet{AG:2024} is not satisfied, and the explicit expression for the asymptotic bias in their Theorem 2.2(v), corresponding to the asymptotic bias derived in \cite{DM:2012} for the von Mises kernel, is not applicable. 
Again, there is a great similarity between the Fej\'er and the wrapped Cauchy kernel, which also does not fulfil (K6). Instead, we have the following result.

\begin{theorem} \label{F-bias}
Under assumption (F), the integrated squared bias of $\hat{F}_{m,n}$ is given by
\begin{align*}
\ISB(F;\hat{F}_{m,n}) &= \frac{\theta_2(F)}{(m+1)^2} + o(m^{-2}) ,
\end{align*}
where 
\begin{align} \label{theta2-F}
\theta_2(F) &= \frac{1}{\pi}\sum_{k=1}^{\infty} \left( a_k^2+b_k^2 \right) 
+ \frac{2}{\pi}\sum_{k,l=1}^{\infty} (-1)^{k+l} b_k b_l 
= \int_{-\pi}^{\pi} \left( f^{\sim}(\theta) - f^{\sim}(-\pi) \right)^2 d\theta,
\end{align}
and $f^{\sim}(\theta)=1/\pi \sum_{k=1}^{\infty} \left(a_k \sin(k\theta) - b_k \cos(k\theta) \right)$ denotes the Hilbert transform of $f=F'$.
\end{theorem}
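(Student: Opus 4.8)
The plan is to compute the pointwise bias $B_m(\theta)=E\hat{F}_{m,n}(\theta)-F(\theta)$ explicitly in terms of the trigonometric moments $a_k,b_k$ of $f=F'$, and then evaluate $\int_{-\pi}^\pi B_m(\theta)^2\,d\theta$ by Parseval's identity. Since $E\hat{F}_{m,n}(\theta)=\int_{-\pi}^\theta \sigma_m(f;y)\,dy$ and $F(\theta)=\int_{-\pi}^\theta f(y)\,dy$, and both integrands have the expansion $\tfrac{1}{2\pi}+\tfrac{1}{\pi}\sum_k \gamma_k(m)(a_k\cos ky+b_k\sin ky)$ (with the damping factor $\gamma_k(m)$ replaced by $1$ for $F$), I would integrate term by term. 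The crucial observation is that $\int_{-\pi}^\theta \sin(ky)\,dy=\tfrac{1}{k}\bigl((-1)^k-\cos k\theta\bigr)$ produces a $\theta$-independent constant $(-1)^k/k$ alongside the oscillatory part; this is precisely what generates the $f^{\sim}(-\pi)$ term in $\theta_2(F)$ and the dependence on the origin $\theta_0=-\pi$. Collecting the contributions from $k\le m$ (damping $-k/(m+1)$) and $k>m$ (damping $-1$) yields
\[
B_m(\theta) = -\frac{1}{\pi(m+1)}\sum_{k=1}^{m}\bigl(a_k\sin k\theta - b_k\cos k\theta + (-1)^k b_k\bigr) - \frac{1}{\pi}\sum_{k=m+1}^{\infty}\frac{1}{k}\bigl(a_k\sin k\theta - b_k\cos k\theta + (-1)^k b_k\bigr).
\]

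Next I would split $B_m$ into its oscillatory part (sines and cosines with $k\ge 1$) and its constant part (the $(-1)^k b_k$ terms), and expand the $L^2$-norm. By orthogonality the oscillatory blocks with $k\le m$ and $k>m$ do not interact, and the mean-zero oscillatory part is orthogonal to the constant part, so the integrated squared bias splits cleanly into three pieces: a leading term $\tfrac{1}{\pi(m+1)^2}\sum_{k=1}^m(a_k^2+b_k^2)$, a tail term $\tfrac{1}{\pi}\sum_{k>m}(a_k^2+b_k^2)/k^2$, and the squared total constant $2\pi(C_1+C_2)^2$, where $C_1=-\tfrac{1}{\pi(m+1)}\sum_{k=1}^m(-1)^k b_k$ and $C_2=-\tfrac{1}{\pi}\sum_{k>m}(-1)^k b_k/k$. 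Using assumption (F), which gives $f\in C^1$ and hence $\sum_k k^2(a_k^2+b_k^2)=\pi\|f'\|_2^2<\infty$, I would show that the tail oscillatory piece is $o(m^{-4})$, that the difference between $\sum_{k=1}^m$ and $\sum_{k=1}^\infty$ in the leading term is $o(m^{-2})$, and — via Cauchy–Schwarz, using $\sum_{k>m}|b_k|=o(m^{-1/2})$ — that $C_1=f^{\sim}(-\pi)/(m+1)+o(m^{-3/2})$ and $C_2=o(m^{-3/2})$.

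Assembling these estimates, the integrated squared bias equals $\tfrac{1}{\pi(m+1)^2}\sum_{k=1}^\infty(a_k^2+b_k^2)+\tfrac{2\pi\,f^{\sim}(-\pi)^2}{(m+1)^2}+o(m^{-2})$. To match the stated formula I would identify $\tfrac{2}{\pi}\sum_{k,l}(-1)^{k+l}b_k b_l=\tfrac{2}{\pi}\bigl(\sum_k(-1)^k b_k\bigr)^2=2\pi f^{\sim}(-\pi)^2$, which recovers the coefficient-based expression for $\theta_2(F)$; a direct Parseval computation of $\int_{-\pi}^\pi(f^{\sim}(\theta)-f^{\sim}(-\pi))^2\,d\theta$, using that $f^{\sim}$ has zero mean so the cross term drops out, then confirms the integral representation.

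The step I expect to be the main obstacle is the careful bookkeeping of the constant term. The $(-1)^k/k$ contributions arising from integrating the sine terms are easy to overlook, yet they are responsible for the second summand of $\theta_2(F)$; controlling the remainder $R_m=\sum_{k>m}(-1)^k b_k$ at the correct order (namely $o(m^{-1/2})$, so that after division by $m+1$ and squaring it is absorbed into $o(m^{-2})$) is the delicate estimate, and it is exactly here that the hypothesis of a square-integrable derivative $f'$, rather than merely $f\in L^2$, is essential.
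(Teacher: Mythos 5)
Your proposal is correct and follows essentially the same route as the paper: both reduce the bias to $\frac{1}{\pi}\sum_k\frac{\gamma_k-1}{k}\,\delta_k(\theta)$ with $\delta_k(\theta)=a_k\sin(k\theta)-b_k\cos(k\theta)+(-1)^k b_k$ (the $(-1)^k b_k$ constants coming from the evaluation at $-\pi$ being exactly what produces the $f^{\sim}(-\pi)$ term) and then evaluate the $L^2$ norm by orthogonality/Parseval. The only difference is in the bookkeeping of the remainder: you control the tail blocks and the constant term directly in $L^2$ via Cauchy--Schwarz, whereas the paper first establishes a uniform pointwise approximation $E\hat{F}_{m,n}(\theta)-F(\theta)=\frac{-1}{(m+1)\pi}\sum_{k=1}^{\infty}\delta_k(\theta)+o(m^{-1})$ using $\sum_k|\delta_k(\theta)|<\infty$; both routes are valid and lead to the same $\theta_2(F)$.
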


\begin{proof}
For a general integrated kernel, write
\[ W(y) = \frac{y + \pi}{2\pi} + \frac{1}{\pi} \sum_{k=1}^{\infty} \frac{\gamma_k}{k} \sin(ky). \]
First, we compute 
\begin{align*}
E\hat{F}_{m,n}(\theta) 
&= \int_{-\pi}^{\pi} \left( W(\theta - y) - W(-\pi - y) \right) f(y) \, dy.
\end{align*}
Using $\sin(k(\theta - y)) = \sin(k\theta) \cos(ky) - \cos(k\theta) \sin(ky)$, we obtain
\begin{align*}
W(\theta - y) & - W(-\pi - y) \\
& = \frac{\theta + \pi}{2\pi} + \frac{1}{\pi} \sum_{k=1}^{\infty} \frac{\gamma_k}{k} (\sin(k\theta) \cos(ky) - \cos(k\theta) \sin(ky) + (-1)^k \sin(ky)).
\end{align*}
Hence,
\begin{align*}
&E\hat{F}_{m,n}(\theta) = \int_{-\pi}^{\pi}  \left( \frac{\theta + \pi}{2\pi} + \frac{1}{\pi} \sum_{k=1}^{\infty} \frac{\gamma_k}{k} (\sin(k\theta) \cos(ky) - \cos(k\theta) \sin(ky) + (-1)^k \sin(ky)) \right) \\
&\hspace*{24mm} \left( \frac{1}{2\pi} + \frac{1}{\pi} \sum_{k=1}^{\infty} \left(a_k \cos(ky) + b_k \sin(ky) \right) \right) dy \\
&=  \frac{\theta+\pi}{2\pi} + \frac{1}{\pi^2} \sum_{k=1}^{\infty} \frac{\gamma_k}{k}
\int_{-\pi}^{\pi} \left( a_k\sin(k\theta) \cos^2(ky) - b_k\cos(k\theta)\sin^2(ky) + (-1)^k b_k\sin^2(ky) \right) \\
&= \frac{\theta + \pi}{2\pi} +  \frac{1}{\pi} \sum_{k=1}^{\infty}  \frac{\gamma_k}{k} \left( a_k \sin(k\theta) - b_k \cos(k\theta) + (-1)^k b_k \right).
\end{align*}
To compute the integrated squared bias, note that 
\begin{align*}
F(\theta) 
&=\frac{\theta+\pi}{2\pi} + \frac{1}{\pi} \sum_{k=1}^{\infty} \left( \frac{a_k}{k} \sin(k\theta) - \frac{b_k}{k} \cos(k\theta) + \frac{b_k}{k} (-1)^k \right), 
\end{align*}
and thus
\begin{align*}
E\hat{F}_{m,n}(\theta) - F(\theta) 
&= \frac{1}{\pi} \sum_{k=1}^{\infty} \frac{\gamma_k-1}{k} \, \delta_k(\theta), 
\end{align*}
where 
\[
\delta_k(\theta) = a_k \sin(k\theta) - b_k \cos(k\theta) + b_k (-1)^k.
\]
Inserting the Fej\'er kernel $\gamma_k=(1-k/(m+1)), k\leq m$, yields
\begin{align*}
E\hat{F}_{m,n}(\theta) - F(\theta) 
&= \frac{-1}{(m+1)\pi} \left( \sum_{k=1}^{\infty} \delta_k(\theta) - \sum_{k=m+1}^{\infty} \frac{k-m-1}{k} \delta_k(\theta) \right) \\
&= \frac{-1}{(m+1)\pi} \sum_{k=1}^{\infty} \delta_k(\theta) + o(m^{-1}),
\end{align*}
since $\sum_{k=1}^{\infty} |\delta_k(\theta)|<\infty$ under assumption (F).
Next, we have
\begin{align*}
\int_{-\pi}^{\pi} \left(\frac{1}{\pi}\sum_{k=1}^{\infty} \delta_k(\theta) \right)^2 d\theta
&= \frac{1}{\pi}\sum_{k=1}^{\infty} \left( a_k^2+b_k^2 \right) 
+ \frac{2}{\pi}\sum_{k,l=1}^{\infty} (-1)^{k+l} b_k b_l
= \theta_2(F),
\end{align*} 
say. Since the Hilbert transform of $\cos(y)$ is $H(\cos)(y)=\sin(y)$, and $H(\sin)(y)=-\cos(y)$, we obtain, using the linearity of the transform,
\begin{align*}
f^{\sim}(\theta) & = H(f)(\theta) 
=\frac{1}{\pi} \sum_{k=1}^{\infty} \left(a_k \sin(k\theta) - b_k \cos(k\theta) \right).
\end{align*}
It follows that  
\begin{align*}
\theta_2(F) &= \int_{-\pi}^{\pi} \left( f^{\sim}(\theta) - f^{\sim}(-\pi) \right)^2 d\theta.
\end{align*}
Summarizing, we obtain
\begin{align*}
\ISB(F;\hat{F}_{m,n}) 
&= \frac{\theta_2(F)}{(m+1)^2} + o(m^{-2}).
\end{align*}
\end{proof}

From Theorem \ref{F-bias} and (\ref{AVar-F}), we obtain the following result.

\begin{corollary} \label{cor-AMISE}
Under assumption (F), and if $m\to\infty$ and $n/m\to \infty$, as $n\to\infty$, we have
\begin{align*} 
\AMISE(F;\hat{F}_{m,n}) &= \frac{1}{n} \int_{-\pi}^{\pi} F(\theta)(1-F(\theta))\, d\theta + R_{m,n},
\end{align*}
where
\begin{align} \label{AMISE_F}
R_{m,n} &= - \frac{2\log m}{\pi mn} \left( 1+2\pi F^{\prime}(-\pi) \right) + \frac{\theta_2(F)}{m^2}.
\end{align}
\end{corollary}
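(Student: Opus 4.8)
The plan is to use the standard decomposition of the asymptotic mean integrated squared error as the sum of the integrated asymptotic variance and the integrated squared bias,
\[
\AMISE(F;\hat{F}_{m,n}) = \int_{-\pi}^{\pi} \AVar\bigl(\hat{F}_{m,n}(\theta)\bigr)\, d\theta + \ISB(F;\hat{F}_{m,n}),
\]
and to assemble the two pieces that have already been established. The integrated squared bias is supplied directly by Theorem \ref{F-bias}, while the pointwise asymptotic variance is given by (\ref{AVar-F}); the only work is to integrate the variance over $[-\pi,\pi]$ and to reconcile the $(m+1)^2$ appearing in the bias with the $m^2$ appearing in $R_{m,n}$.

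First I would integrate (\ref{AVar-F}) term by term. The first term contributes $\tfrac{1}{n}\int_{-\pi}^{\pi} F(\theta)(1-F(\theta))\, d\theta$ verbatim, which is the leading term in the statement. The second term is a constant multiple of $\int_{-\pi}^{\pi}\bigl(F'(\theta)+F'(-\pi)\bigr)\, d\theta$. Here the boundary behaviour of the circular CDF enters: since $F(-\pi)=0$ and $F(\pi)=1$, the fundamental theorem of calculus gives $\int_{-\pi}^{\pi} F'(\theta)\, d\theta = F(\pi)-F(-\pi)=1$, while the constant $F'(-\pi)$ integrates to $2\pi F'(-\pi)$. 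Hence $\int_{-\pi}^{\pi}\bigl(F'(\theta)+F'(-\pi)\bigr)\, d\theta = 1+2\pi F'(-\pi)$, and the integrated variance becomes
\[
\frac{1}{n}\int_{-\pi}^{\pi} F(\theta)(1-F(\theta))\, d\theta - \frac{2\log m}{\pi mn}\bigl(1+2\pi F'(-\pi)\bigr).
\]

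Next I would add the integrated squared bias from Theorem \ref{F-bias}. A short expansion, $(m+1)^{-2}=m^{-2}(1+1/m)^{-2}=m^{-2}+O(m^{-3})$, shows that $\theta_2(F)/(m+1)^2 = \theta_2(F)/m^2 + o(m^{-2})$, so the bias contributes $\theta_2(F)/m^2$ to the order retained in $R_{m,n}$. Collecting the variance and bias terms then yields exactly the claimed form of $R_{m,n}$, completing the argument. I do not expect a serious obstacle here: the corollary is a direct consequence of Theorem \ref{F-bias} and (\ref{AVar-F}), and the only points requiring care are the use of the boundary values $F(-\pi)=0$, $F(\pi)=1$ when integrating $F'$, and the harmless replacement of $(m+1)^2$ by $m^2$ in the leading bias term.
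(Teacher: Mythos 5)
Your proposal is correct and follows exactly the route the paper intends: the corollary is stated as an immediate consequence of Theorem \ref{F-bias} and the variance expression (\ref{AVar-F}), and your integration of the variance term using $F(-\pi)=0$, $F(\pi)=1$ together with the replacement of $(m+1)^{-2}$ by $m^{-2}+o(m^{-2})$ supplies precisely the omitted details.
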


The asymptotic optimal choice of $m$ is given as the solution of
\begin{align} \label{optimal-m2}
\left( 1+2\pi F'(-\pi) \right) m (\log m-1) &= \pi\theta_2(F) \, n.
\end{align}
Writing $c=\pi\theta_2(F) (1+2\pi F'(-\pi))^{-1}$, equation (\ref{optimal-m2}) is equivalent to 
\begin{align*} 
u e^u &= \frac{cn}{e}, \quad \text{with} \quad u=\frac{cn}{m}. 
\end{align*}
The solution of this equation is $W_0(cn/m)$, where $W_0$ denotes the principal branch of the Lambert W function. Substituting back, we obtain as the asymptotic optimal choice of $m$
\begin{align}\label{m-opt-CDF}
    m_{opt} &= \frac{cn}{W_0(cn/e)}.
\end{align}
We note that the asymptotic behavior of $W_0$ is given by $W_0(z)=\log(z)-\log(\log(z))+o(1)$.
The sign of the remainder $ R_{m,n}$ in (\ref{AMISE_F}) depends on $c$ and $n$. 
Thus, without further knowledge, it is unclear whether the fast convergence to zero shown by the Fej\'er kernel or a slower rate like for the von Mises kernel, where the rate is $O(n^{-4/3})$ \citep{AG:2024}, is preferable. 
As for the density estimators, the results for the Fej\'er distribution function estimator are quite parallel to the results for the wrapped Cauchy kernel, where the rate of the remainder term $R_n$ is $o(n^{-\beta})$ for any $\beta <2$ (see Table 2, p. 13, in \cite{AG:2024}; the exact rate is not given).

\subsection{Optimal choice of the origin} \label{sec:low-lim}

In this subsection, we discuss a possible choice of origin when defining the CDF. Write 
\begin{align*}
    F^{\theta_0}(\theta) &= \int_{-\theta_0}^{\theta} f(y)dy, \quad
    \hat{F}_{m,n}^{\theta_0}(\theta) = \int_{-\theta_0}^{\theta} \hat{f}_{m,n}(y)dy.
\end{align*}
When choosing $\theta_0$ as origin, the expression for the AMISE in Corollary \ref{cor-AMISE} becomes 
\begin{align} \label{C-crit}
\AMISE(F^{\theta_0};\hat{F}_{m,n}^{\theta_0}) &= \frac{C(F^{\theta_0})}{n} + R_{m,n}^{\theta_0},
\quad 
C(F^{\theta_0}) = \int_{\theta_0}^{\theta_0+2\pi} F^{\theta_0}(\theta) \big(1-F^{\theta_0}(\theta)\big) d\theta,
\end{align}
where $R_{m,n}^{\theta_0}$ corresponds to $R_{m,n}$ with $F'(-\pi)$ replaced by $F'(\theta_0)$, and $\theta_2(F)$ by
\begin{align*}
    \theta_2(F,\theta_0)=\frac{1}{\pi}\sum_{k=1}^{\infty} \left( a_k^2+b_k^2 \right) 
+ \frac{2}{\pi} \bigg( \sum_{k=1}^{\infty} \left(-a_k\sin(k\theta_0) + b_k\cos(k\theta_0) \right) \bigg)^2.
\end{align*}
There are three main messages from (\ref{C-crit}):
\begin{enumerate}
\item 
As in the linear case, the dominant part is $C(F^{\theta_0})/n$, which corresponds to the variance of the empirical CDF and does not depend on $m$.
\item 
Unlike in the linear case, this dominant part depends on the choice of the origin.
\item 
The smoothing effect is captured in $R_{m,n}^{\theta_0}$, which shows that the smoothing has a second-order effect on the CDF. Thus, the choice of $m$ is less important for smooth CDF estimation.
\end{enumerate}
Given this, we propose the following two-step approach: \\
First, choose the origin by minimizing an empirical counterpart of $C(F^{\theta_0})$. To do this, transform the data set to the interval $[\theta_0,\theta_0+2\pi)$, which yields the ordered values $x_{(1)}^{\theta_0} \leq \ldots \leq x_{(n)}^{\theta_0}$, and compute the empirical CDF $F_n^{\theta_0}$ with origin $\theta_0$ based on these values. Then minimize
\begin{align} \label{Cn-crit}
 C_n(\theta_0) = C(F_n^{\theta_0}) &= \sum_{i=1}^n \frac{i}{n}  \Big(1-\frac{i}{n}\Big) \left(x_{(i+1)}^{\theta_0} - x_{(i)}^{\theta_0} \right) 
\end{align}
with respect to $\theta_0\in[-\pi,\pi)$. In practice, since the function \eqref{Cn-crit} is piecewise constant, the minimum is identified over a range of values, and the circular midpoint of this range is taken as the final estimate.
In the second step, select $m$ as described in the previous section. 

To illustrate the first step, consider Figure \ref{fig:origin}, which shows $C_n(\theta_0)$ for samples of size $n=200$ from a von Mises distribution $VM(\pi/2,0.5)$ (left panel), $VM(\pi/2,2)$ (middle panel), and from a mixture $Mix(VM(0,2),VM(\pi,3),0.5)$ of two von Mises distributions with equal mixing proportions (right panel).

\begin{figure}
    \centering
    \includegraphics[width=\textwidth]{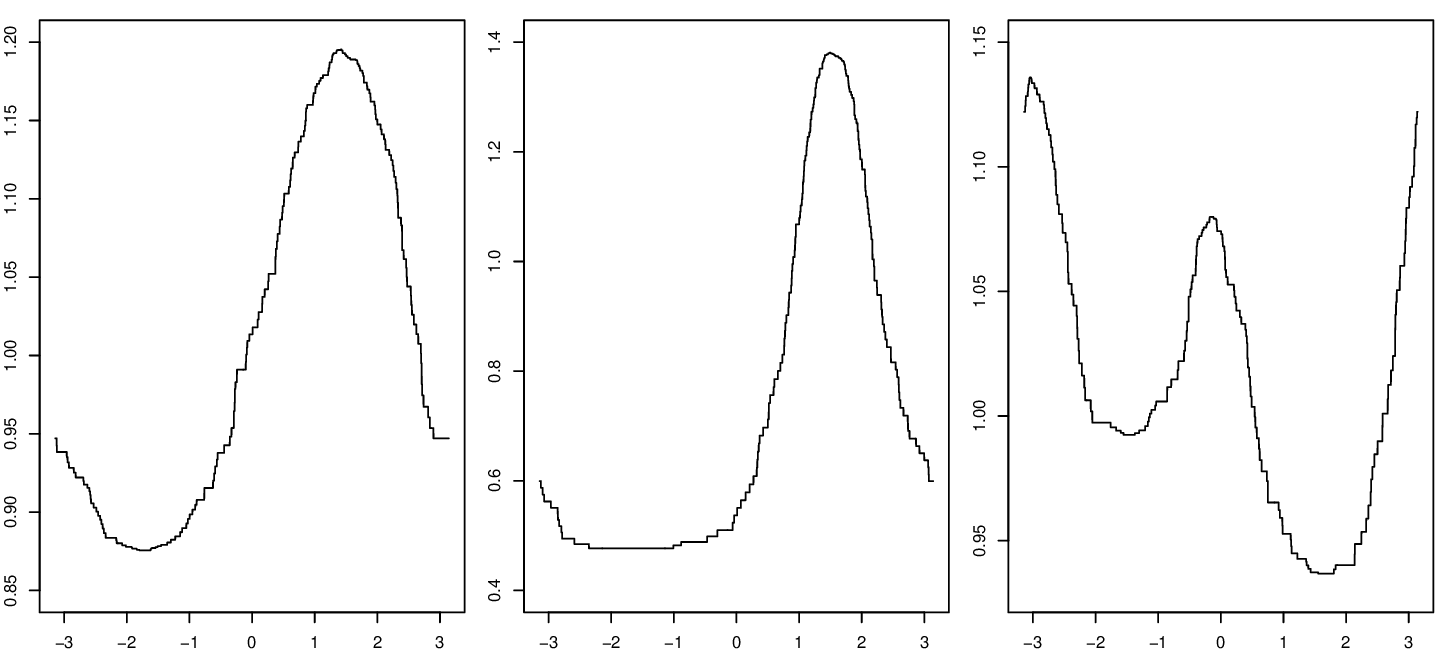}
    \caption{Plot of $C_n(\theta_0)$ in (\ref{Cn-crit}) for samples of size $n=200$ from $VM(\pi/2,0.5)$ (left panel), $VM(\pi/2,2)$ (middle panel), and $Mix(VM(0,2),VM(\pi,3),0.5)$ (right panel) for $\theta_0\in[-\pi,\pi)$. \label{fig:origin}}
\end{figure}

For $VM(\pi/2,0.5)$, the minimum and maximum values are $0.88$ and $1.20$. For the more concentrated $VM(\pi/2,2)$ distribution, we get $0.48$ and $1.38$, almost a factor of 3. For the mixture distribution, the minimum is $0.94$, quite close to the maximum of $1.08$.
These examples show that the choice of origin can greatly affect the mean integrated squared error.

\section{Density estimation with measurement error} \label{sec:meas-error}

Density estimation in the presence of measurement error is a challenging problem that has attracted the attention of many researchers.
Most of the work has been focused on linear data (see \cite{DE:2014} and references therein), while the first attempt at the case of circular data is from \cite{DM:2022}. 

Measurement error models are typically classified into two categories: Berkson and classical. Both models are additive and differ only in assuming whether the error variable is independent of the observed or unobservable data. We consider estimation under both models using the deconvolution approach in the following.

\subsection{Density estimation with Berkson error model}

We consider the model
\begin{align}\label{berksonModel}
    X=(X^{\star}+\varepsilon)\mod2\pi,
\end{align}
where the random variable $X$ with density $f_X$ is the (unobservable) quantity we want to measure, $X^\star$ with density $f_{X\star}$ is the measured value from which we have the sample $X^\star_1,\ldots,X^\star_n$. 
The random error $\varepsilon$ is independent of $X^\star$ and has a density $f_{\varepsilon}$ that is symmetric around zero.
We assume that all densities are square-integrable on $[0,2\pi)$ and allow an absolutely convergent Fourier series representation.

We assume $f_\varepsilon$ to be known with some concentration parameter $\kappa_\varepsilon$ and a Fourier representation
\begin{align*}f_\varepsilon(u)=\frac1{2\pi}\Big(1+2\sum_{j=1}^\infty\lambda_j(\kappa_\varepsilon)\cos(ju)\Big).
\end{align*}
According to \eqref{berksonModel}, for $x\in[0,2\pi)$ we have 
\begin{align}
    f_{X}(x)=\int_0^{2\pi}f_{X\star}(u)f_{\varepsilon}(x-u)du,
\end{align}
and, for $l\in\mathbb{Z}$,
\begin{align}
    \varphi_{X}(l)=\varphi_{X^\star}(l)\varphi_{\varepsilon}(l),
\end{align}
where $\varphi_{X},\varphi_{X^\star}$ and $\varphi_{\varepsilon}$ are the respective characteristic functions. Then, using the inversion formula we can obtain the simple estimator
\begin{align}\label{simpleEstimator}
    \tilde{f}_{X}(x)=\frac1{2\pi}\sum_{l=-\infty}^\infty \hat{\varphi}_{X^{\star}}(l)\varphi_{\varepsilon}(l)e^{-ilx},
\end{align}
where $\hat{\varphi}_{X^{\star}}$ is the empirical characteristic function of the sample $X^\star_1,\ldots,X^\star_n$.

Following the idea presented in \cite{SO:2021} for the analogous case of linear data, to increase smoothness we suggest a modification of \eqref{simpleEstimator} by adding a Fej\'er kernel factor
\begin{align}\label{deconvolutionEstimator}
    \hat{f}(x;m)&=\frac1{2\pi}\sum_{l=-\infty}^\infty \hat{\varphi}_{X^{\star}}(l)\varphi_{\varepsilon}(l)\varphi_{K_{m}}(l)e^{-ilx}\\
    &=\frac{1}{2\pi n}\sum_{j=1}^n\Big(1+2\sum_{l=1}^m\left(1-\frac{l}{m+1}\Big)\lambda_l(\kappa_\varepsilon)\cos(l(x-X^\star_j))\right).
\end{align}
Since the error is independent of the observed data, optimal bandwidth can be chosen as in the error-free case.

\subsection{Density estimation in the classical error model}

Here we consider the model
\begin{align}\label{classicalModel}
    X^{\star}=(X+\varepsilon)\mod2\pi,
\end{align}
where the notation is the same as in the previous section and the error variable $\varepsilon$ is independent of the unmeasurable quantity $X$.
As before, it holds for $x\in[0,2\pi)$, 
\begin{align}
    f_{X\star}(x)=\int_0^{2\pi}f_{X}(u)f_{\varepsilon}(x-u)du,
\end{align}
and, for $l\in\mathbb{Z}$,
\begin{align}
    \varphi_{X^\star}(l)=\varphi_{X}(l)\varphi_{\varepsilon}(l),
\end{align}
where $\varphi_{X},\varphi_{X^\star}$ and $\varphi_{\varepsilon}$ are the respective characteristic functions.
Then, as in \cite{DM:2022}, using the inversion formula we can get the simple estimator
\begin{align}\label{simpleEstimatorC}
    \tilde{f}_{X}(x)=\frac1{2\pi}\sum_{l=-\infty}^\infty \frac{\hat{\varphi}_{X^{\star}}(l)}{\varphi_{\varepsilon}(l)}e^{-ilx},
 \end{align}  
where $\hat{\varphi}_{X^{\star}}$ is the empirical characteristic function of the sample $X^\star_1,\ldots,X^\star_n$. We obtain
 \begin{align}   
   \tilde{f}_{X}(x)  =\frac{1}{2\pi n}\sum_{i=1}^n\Big(1+2\sum_{l=1}^m\frac{(1-\frac{l}{m+1})}{\lambda_{\varepsilon}(l)}\cos(l(x-X^\star_i))\Big).
\end{align}
According to \cite{DM:2022}, the asymptotic bias matches that of the error-free scenario given in \eqref{ISB}. We can determine the MISE's behaviour using \citet[Result 3]{DM:2022} for the asymptotic variance. For some particular distributions, the MISE can be derived explicitly.
For example, when the error follows wrapped Laplace distribution with scale parameter $\rho$, 
 its Fourier coefficients are given by $\lambda_{\varepsilon}(j)=\frac{\rho^2}{\rho^2+j^2}$. In this case,  the $\operatorname{MISE}$ is expressed as
\begin{align*}
\operatorname{MISE}\left[\tilde{f}_n(\theta;m) \right]= \frac{\theta_1(f)}{m^2}+\frac{1}{2\pi n}\Big(1+\frac{2m^5}{105\rho^4}\Big)+o\Big(\frac{1}{m^2}\Big) +o\Big(\frac{m^5}{n}\Big).
\end{align*}
Thus, the optimal bandwidth in terms of AMISE is
 \begin{align*}
     m_{opt}=\Big(\frac{42\pi}{\rho^4}\theta_1(f)\cdot n\Big)^{\frac{1}{7}}.
 \end{align*}

\section{Simulation study} \label{sec:sim-study}

The goal of this section is to explore the finite sample properties of the F\'ejer estimators for densities, both with and without measurement error, as well as for the distribution function.  In particular, we consider the mixture of distributions denoted by
$Mix(F_1,F_2,p)$ where $F_1$ and $F_2$ are either von Mises (VM) or wrapped normal (WN) circular distributions with corresponding parameters. 
The mean integrated squared error (MISE) is evaluated using $N=500$ replications, exploring a range of sample sizes $n$ and various distribution parameters.

In the case of error-free density estimation, the MISE is estimated for 
$m\in\{5,10,\sqrt{n}\}$ as well as for the asymptotically optimal choice \eqref{optimal-m}. Here,  $\theta_1(f)$ is estimated both parametrically, assuming a von Mises distribution with ML estimation of its concentration parameter, and nonparametrically using \eqref{plugin-first-derivative}, for $M(n)=2n^{1/4}$. In addition to MISE, we provide the average optimal values of $m$ for both estimation approaches, along with the theoretically optimal $m$ (assuming $\theta_1(f)$ to be known).  
Table  \ref{tab:MixWN} shows the results for mixtures of wrapped normal distributions.
Results for mixtures of VM distributions are omitted due to their similarity.

It can be seen that the average values of the nonparametric estimator of the optimal $m$ consistently overestimate the quantity of interest. In contrast, in the parametric case, the values obtained are close to the theoretical ones. The only exceptions occur for alternatives that are mixtures of distributions with antipodal mean directions. In practice, one way to improve the parametric estimator for such mixtures is to consider a two-step procedure that uses a mixture of VM distributions \citep{OC:2012}.

The same analysis is conducted in the context of the classical measurement error model.
Table \ref{tab:MixWNclassic} shows the MISE results for the classical error scenario with a wrapped Laplace distribution. The scale parameter is set to $0.2$ to cover cases with noise-to-signal ratios ranging from about 5\% to 36\%.

The Berkson model with uniform distribution is a good approximation when the error arises as a result of rounding (\cite{WW:2013}). We therefore examine the properties of different estimators in this context. We consider data that are rounded to nearest multiple of $\frac{\pi}{6}$, which implies that the errors follow a uniform $U[-\frac{\pi}{12},\frac{\pi}{12}]$ distribution. Just for comparison purposes, we also include cases with wrapped Laplace errors with scale parameters 0.1 and 0.2. The MISE of the corresponding estimators for the optimal $m$, both parametric and nonparametric as described above, are presented in Table \ref{tab:BerksonRound}. The results clearly show the importance of including the Berkson error when the data are rounded. In particular, MISE  increases, when using the error-free estimator, and when using the $WL(0.1)$ Berkson error, while it decreases for the $WL(0.2)$ and uniform error distributions. The best performance is observed under the assumption of a uniform error, which is somewhat expected given the nature of the data.

The results of the empirical study for estimating the CDF are presented in Tables \ref{tab:MixVMCDF} and \ref{tab:MixVMCDFestimatedTheta}, corresponding to cases where \(\theta_0 = -\pi\) is fixed and where \(\theta_0\) is estimated by minimizing \eqref{Cn-crit}, respectively.  

When $\theta_0$ is fixed to $-\pi$, the MISE varies significantly with the change in the location parameter, being smallest when the optimal $\theta_0$ is closest to $-\pi$. In contrast, when \(\theta_0\) is estimated, the MISE remains stable across different values of the location parameter. It is also observed that the estimated value of the optimal $\theta$ is close to the theoretical value. 
With respect to the smoothing parameter $m$, the conclusions are in line with those of the density estimation.

\section{Application to rainfall data}\label{rainfall}

We consider the data from \cite{DM:1941} (see Table \ref{tab:freq}). The data represent the number of rainfall occurrences of $1"$ or more per hour in the US from 1908 to 1937, with frequencies adjusted to overcome different month lengths, see \cite{MA:1975}. Since the data are reported monthly, the assumption of a uniform Berkson error is natural. 
We compare different density estimation methods, plotting the results using a histogram of the data along with the Fej\'er density estimator under various error scenarios. These include
 the error-free case (blue line), as well as cases where the Berkson error is modeled using different distributions: $WL(0.1)$  (orange line), $WL(0.2)$  (purple line), and $U[-\frac{\pi}{12},\frac{\pi}{12}]$  (red line). 
Two types of density estimates are considered: one where $\theta_1(f)$ is estimated parametrically, and another where it is estimated nonparametrically, using the same specifications as in the simulation study. The plots in Figure \ref{fig:rain} reveal that failing to account for Berkson error, or misspecifying it, can result in the detection of more local modes than are actually present in the data, leading to misleading conclusions.

\begin{table}[htb]
    \centering
    \begin{tabular}{c|cccccccccccc}
      Month & Jan& Feb &Mar &Apr& May& Jun &Jul&Aug&Sep&Oct&Nov&Dec  \\
      Adj. freq. &100&103&229&414&676&1248&1458&1365&924&378&199&143 
    \end{tabular}
    \caption{Rainfall data for Section \ref{rainfall}}
    \label{tab:freq}
\end{table}

\begin{figure}
  \centering
 \includegraphics[width=0.49\textwidth]{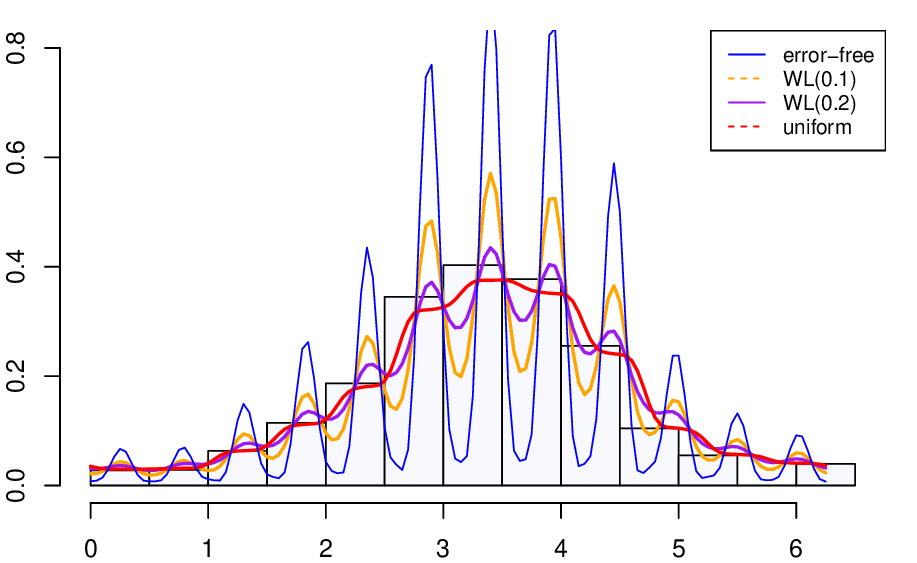}
  \includegraphics[width=0.49\textwidth]{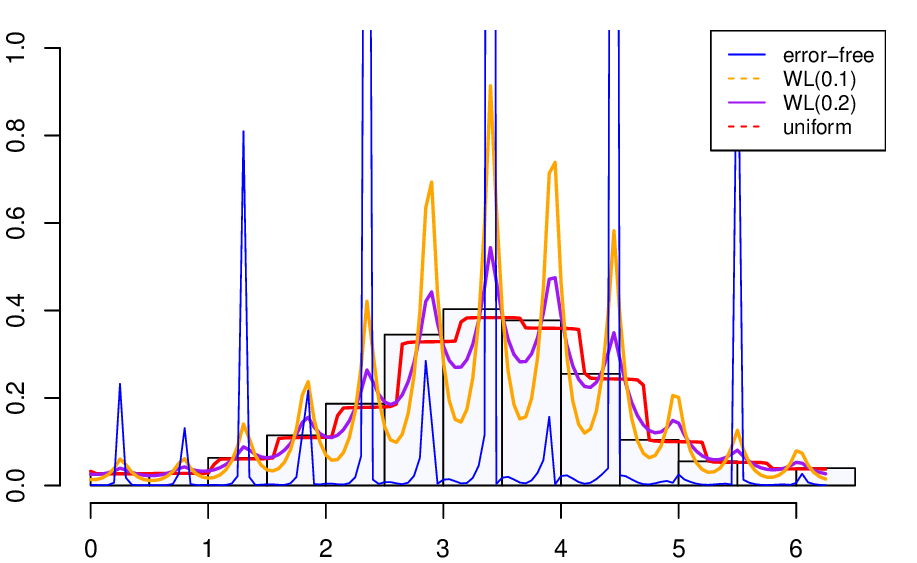}
\caption{Fej\'er density estimators with Berkson error for the rainfall data with $\theta_1(f)$ estimated parametrically (left) and nonparametrically (right) \label{fig:rain}}
\end{figure}


\section{Concluding remarks}\label{sec:conclude}

This work highlights the potential of Fej\'er polynomials as a flexible and powerful tool for density and distribution function estimation on the circle.
By applying Fej\'er’s approximation theorem, the proposed estimators inherently account for the periodic nature of circular data, addressing limitations in approaches which use Bernstein polynomials on the circle. Extending the estimators to handle measurement errors ensures applicability in realistic scenarios where data inaccuracies are common. 

Future work could investigate further refinements in bandwidth selection, particularly in relation to the data-driven selection of the lower integration limit in defining the circular CDF estimator.
Overall, the Fej\'er polynomial framework offers a promising approach for nonparametric circular data analysis.

\part*{Appendix}
\appendix

\section{Auxiliary results for subsection \ref{sec:asym-exp}} \label{app:A}

\begin{lemma} \label{lemma-harmonic}
Define
\begin{align*}
    H_m=\sum_{k=1}^m \frac{1}{k}, \quad \bar{H}_m=\sum_{k=1}^m \frac{(-1)^{k+1}}{k}, \quad
    H_m^l=\sum_{k=1}^m \frac{1}{k^l}, \quad \bar{H}_m^l=\sum_{k=1}^m \frac{(-1)^{k+1}}{k^l}, \, l\geq 2.
\end{align*} 
Then,
\begin{align*}
    H_m &= \gamma + \log m + \frac{1}{2m} + O(m^{-2}), \quad
    H_m^2 = \frac{\pi^2}{6} - \frac{1}{m} + \frac{1}{2m^2} + O(m^{-3}), \\
    H_m^3 &= \zeta(3) - \frac{1}{2m^2} + \frac{1}{2m^3} + O(m^{-4}), \quad
    H_m^4 = \frac{\pi^4}{90} - \frac{1}{3m^3} + \frac{1}{2m^4} + O(m^{-5}), \\
    \bar{H}_m &= \log 2 - \frac{1}{2m} + O(m^{-2}), \quad
    \bar{H}_m^2 = \frac{\pi^2}{12} - \frac{1}{2m^2} + O(m^{-3}), \\
    \bar{H}_m^l &= \left( 1 - \frac{1}{2^{l-1}} \right) \zeta(l) + \frac{1}{2m^l}  + O(m^{-(l+1)}), \; l\geq 3,
\end{align*} 
where $\gamma=0.5772\ldots$ is the Euler-Mascheroni constant, and $\zeta(\cdot)$ denotes the Riemann zeta function. 
\end{lemma}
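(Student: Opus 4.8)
The plan is to derive every expansion from the Euler--Maclaurin summation formula, which converts a finite sum of the smooth summand $f(k)=k^{-l}$ into an integral plus boundary corrections. For the ordinary harmonic numbers I would apply it directly to $\sum_{k=1}^m k^{-l}$. For $l=1$ this produces $H_m=\log m+\gamma+\frac{1}{2m}-\frac{1}{12m^2}+\cdots$, and absorbing the $m^{-2}$ term into the remainder gives the stated identity. For $l\ge 2$ the series converges, so I would instead write $H_m^l=\zeta(l)-\sum_{k=m+1}^\infty k^{-l}$ and expand the tail, using $\sum_{k>m}k^{-l}=\frac{1}{(l-1)m^{l-1}}-\frac{1}{2m^l}+O(m^{-(l+1)})$. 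Substituting the closed forms $\zeta(2)=\pi^2/6$ and $\zeta(4)=\pi^4/90$ (and leaving $\zeta(3)$ symbolic) then yields the four displayed expansions for $H_m^l$, $l=1,\dots,4$, with the remainders controlled by the size of the first omitted Euler--Maclaurin term.

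For the alternating sums the key device is the splitting identity
\[
\bar H_m^l = H_m^l - 2^{1-l}\,H_{\lfloor m/2\rfloor}^l ,
\]
obtained by subtracting twice the even-indexed terms. Inserting the expansions just derived, the constant parts combine to $(1-2^{1-l})\zeta(l)$, which equals $\log 2$ for $l=1$, $\pi^2/12$ for $l=2$, and matches the stated limit for $l\ge 3$. The decisive point is that the two leading $O(m^{-(l-1)})$ correction terms --- the $-1/((l-1)m^{l-1})$ coming from $H_m^l$ and the contribution of $2^{1-l}H_{\lfloor m/2\rfloor}^l$ --- cancel identically, a direct algebraic consequence of the matching coefficients, so the first surviving correction is of order $m^{-l}$ with absolute value $\frac{1}{2m^l}$. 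I would write $\eta(l)=(1-2^{1-l})\zeta(l)$ for brevity and verify this cancellation by expanding $(m-1)^{-(l-1)}$ to the required order in both parities.

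The step requiring the most care is precisely the bookkeeping of this $m^{-l}$ term, whose sign is governed by the parity of $m$ entering through $\lfloor m/2\rfloor$. Taking $m=2p$ even, the identity gives the correction $-\frac{1}{2m^l}$, while the one-term recursion $\bar H_m^l=\bar H_{m-1}^l+(-1)^{m+1}m^{-l}$ shows that passing to odd $m$ flips the sign, so that in full generality the correction is $(-1)^{m+1}\frac{1}{2m^l}+O(m^{-(l+1)})$. I would therefore fix the parity convention explicitly; the displayed expansions record the sign relevant in each application, and with the convention fixed every identity follows by matching the computed coefficients against the claimed ones, the remainders being $O(m^{-(l+1)})$ (and $O(m^{-2})$ in the cases $l=1,2$). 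The only genuine obstacle is this parity sensitivity of the $\frac{1}{2m^l}$ coefficient; the integral and boundary terms themselves are entirely routine.
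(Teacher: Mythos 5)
Your proposal is correct and follows essentially the same route as the paper: Euler--Maclaurin (equivalently, expansion of the tail $\sum_{k>m}k^{-l}$) for $H_m$ and $H_m^l$, followed by the even/odd splitting $\bar H_m^l=H_m^l-2^{1-l}H_{\lfloor m/2\rfloor}^l$ for the alternating sums; the paper writes the same identity in the form $\bar H_{2m}^l=H_{2m}^l-2^{1-l}H_m^l$. Your extra care about parity is not pedantry but actually catches a real inconsistency: as you compute, the $m^{-l}$ correction is $(-1)^{m+1}\tfrac{1}{2m^l}$ for every $l\ge 1$, so for even $m$ the sign is negative throughout, which matches the stated expansions of $\bar H_m$ and $\bar H_m^2$ but contradicts the stated $+\tfrac{1}{2m^l}$ for $l\ge 3$ (the paper's derivation of that case contains the sign slip, writing $H_{2m}^l+2^{1-l}H_m^l$ where a minus is needed and then landing on $+\tfrac{1}{2(2m)^l}$ instead of $-\tfrac{1}{2(2m)^l}$). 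A quick numerical check, e.g.\ $\bar H_{10}^3-\tfrac34\zeta(3)\approx -4.26\cdot10^{-4}\approx -\tfrac{1}{2\cdot 10^3}$, confirms your sign. Since this term only enters the downstream propositions at orders that are absorbed into the remainders there, the error is harmless for the rest of the paper, but your fixed-parity (or explicit $(-1)^{m+1}$) formulation is the correct way to state the lemma.
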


\begin{proof}
The formulas for $H_m$ and $H_m^l$ follow directly from the usual asymptotic expansions based on the Euler–Maclaurin formula. The result for $\bar{H}_m$ follows from
\begin{align*}
\bar{H}_{2m} &= H_{2m} - H_{m} 
= \log(2m) + \frac{1}{4m} - \log m - \frac{1}{2m} + O(m^{-2}) \\
&= \log 2 - \frac{1}{4m} + O(m^{-2}).
\end{align*}
Similarly, the formulas for $\bar{H}_m^l$ follow from
\begin{align*}
\bar{H}_{2m}^2 &= H_{2m}^2 - \frac{1}{2} H_{m}^2 
= \frac{\pi^2}{12} - \frac{1}{8m^2} + O(m^{-3}), \\
\bar{H}_{2m}^l &= \sum_{k=1}^{2m} \frac{1}{k^l} - 2 \left( \frac{1}{2^l} + \frac{1}{4^l} + \ldots + \frac{1}{(2m)^l} \right) \\
&= H_{2m}^l + \frac{1}{2^{l-1}} H_{m}^l 
\;=\; \left( 1 - \frac{1}{2^{l-1}} \right) \zeta(l) + \frac{1}{2(2m)^l}  + O(m^{-(l+1)}).
\end{align*}
\end{proof}

\begin{proposition} \label{prop-moments}
Define
\begin{align*}
\beta(K_m) &= \int_{-\pi}^{\pi} y^2 K_m(y)dy, \quad
\gamma(K_m) = \int_{-\pi}^{\pi} |y|^3 K_m(y) dy.
\end{align*} 
Then,
\begin{align*}
\beta(K_m) &= \frac{4\log 2}{m+1} + O(m^{-3}), \quad
\gamma(K_m) = \frac{6\pi\log 2 - 21\zeta(3)/\pi}{m+1} + O(m^{-3}).
\end{align*}    
\end{proposition}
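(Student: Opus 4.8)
The plan is to compute both integrals by expanding the Fej\'er kernel into its cosine series and integrating term by term against $y^2$ and $|y|^3$ over $[-\pi,\pi]$. Write $K_m(y) = \frac{1}{2\pi} + \frac{1}{\pi}\sum_{k=1}^m \bigl(1-\frac{k}{m+1}\bigr)\cos(ky)$, so that
\begin{align*}
\beta(K_m) &= \frac{1}{2\pi}\int_{-\pi}^{\pi} y^2\, dy + \frac{1}{\pi}\sum_{k=1}^m \Bigl(1-\frac{k}{m+1}\Bigr) \int_{-\pi}^{\pi} y^2 \cos(ky)\, dy.
\end{align*}
The elementary integrals $\int_{-\pi}^{\pi} y^2\cos(ky)\,dy = 4\pi(-1)^k/k^2$ and $\int_{-\pi}^{\pi}|y|^3\cos(ky)\,dy$ (the latter producing terms in $(-1)^k/k^2$ together with a $1/k^4$ correction) reduce each moment to a finite linear combination of the alternating harmonic-type sums $\bar H_m^2$ and $\bar H_m^4$ and ordinary sums $\sum_k (-1)^k/k$-weighted pieces. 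First I would record these closed forms for the trigonometric integrals, being careful with the $k=0$ contribution from the constant term.

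The key simplification is that the weight $1-k/(m+1)$ splits each sum into a leading piece $\sum_k (-1)^k/k^2$ and a correction $-\frac{1}{m+1}\sum_k (-1)^k/k$. Using Lemma \ref{lemma-harmonic}, $\bar H_m^2 = \pi^2/12 - 1/(2m^2) + O(m^{-3})$ and $\bar H_m = \log 2 - 1/(2m) + O(m^{-2})$. For $\beta(K_m)$, the constant term $\pi^2/3$ from $\frac{1}{2\pi}\int y^2$ must be cancelled by $-\frac{4}{\pi}\cdot\frac{\pi^2}{12} \cdot \pi \cdot(\text{sign bookkeeping})$ coming from the leading $\sum(-1)^k/k^2 = -\bar H_m^2 \to -\pi^2/12$ piece, leaving the genuine $1/(m+1)$ behavior governed by the correction term $\frac{1}{m+1}\cdot 4\log 2$ through $\bar H_m$. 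I would organize the algebra so that this exact cancellation of the $O(1)$ terms is transparent, then read off the coefficient $4\log 2$ of the $1/(m+1)$ term for $\beta$, and similarly $6\pi\log 2 - 21\zeta(3)/\pi$ for $\gamma$, where the $\zeta(3)$ enters through the $\bar H_m^4$ contribution of the $1/k^4$ remainder in the $|y|^3$ integral.

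The main obstacle I anticipate is twofold. First, the integral $\int_{-\pi}^{\pi}|y|^3\cos(ky)\,dy$ is messier than the $y^2$ case: integration by parts on $[0,\pi]$ produces both a $(-1)^k/k^2$ term and a $1/k^4$ term (plus possibly a $(-1)^k/k^4$ term), and tracking the signs and the boundary contributions at $y=\pi$ carefully is where errors creep in. Second, and more delicately, the claimed error term is $O(m^{-3})$ rather than the $O(m^{-2})$ one might naively expect from $\bar H_m$; achieving this sharper rate requires that the $O(m^{-1})$ and $O(m^{-2})$ corrections cancel. I would verify this by retaining the full expansions from Lemma \ref{lemma-harmonic} to one more order than seems necessary and checking that the $1/m$ and $1/m^2$ contributions coming from the two halves of the split weight cancel against each other and against the expansion of $1/(m+1) = 1/m - 1/m^2 + O(m^{-3})$. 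Confirming this cancellation, and thereby justifying the $O(m^{-3})$ remainder, is the part of the calculation that demands the most care.
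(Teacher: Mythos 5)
Your plan coincides with the paper's own proof: expand $K_m$ in its cosine series, integrate termwise using the closed forms of $\int_{-\pi}^{\pi} y^2\cos(ky)\,dy$ and $\int_{-\pi}^{\pi}|y|^3\cos(ky)\,dy$, reduce everything to the harmonic-type sums of Lemma \ref{lemma-harmonic}, and verify that the $O(1)$ and $O(m^{-2})$ contributions cancel to leave the stated $1/(m+1)$ terms with an $O(m^{-3})$ remainder. The only slight imprecision is that $\zeta(3)$ actually enters through $\bar{H}_m^3$ and $H_m^3$ (the $1/k^4$ parts of the $|y|^3$ integral multiplied by the $-k/(m+1)$ half of the weight), not through $\bar{H}_m^4$, which contributes only to the $O(1)$ cancellation via $\zeta(4)$; this does not affect the argument.
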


\begin{proof}
Using Lemma \ref{lemma-harmonic} and $\int_{-\pi}^{\pi} y^2 \cos(ky)dy=4\pi(-1)^k/k^2$ for integer $k$, we obtain
\begin{align*}
\beta(K_m) &= \int_{-\pi}^{\pi} y^2 \left( \frac{1}{2\pi} +  \frac{1}{\pi} \sum_{k=1}^m \left(1-\frac{k}{m+1}\right) \cos(ky) \right) dy \\
&= \frac{\pi^2}{3} +  4 \sum_{k=1}^m \left(1-\frac{k}{m+1} \right) \frac{(-1)^k}{k^2} \\
&= \frac{\pi^2}{3} - 4 \bar{H}_m^2 + \frac{4}{m+1} \bar{H}_m  \\
&= \frac{\pi^2}{3} - 4 \left( \frac{\pi^2}{12} - \frac{1}{2m^2} \right) 
+ \frac{4}{m+1} \left( \log 2 - \frac{1}{2m} \right) + O(m^{-3})   \\
&= \frac{4\log 2}{m+1} + O(m^{-3}) \\
\end{align*}
Again using Lemma \ref{lemma-harmonic} and 
\begin{align*}
    \int_{-\pi}^{\pi} |y|^3 \cos(ky)dy &= 6\left( \frac{\pi^2(-1)^k}{k^2} - \frac{2(-1)^k}{k^4} + \frac{2}{k^4} \right),
\end{align*}
we obtain
\begin{align*}
\gamma(K_m) =& \frac{\pi^3}{4} +  \frac{6}{\pi} \sum_{i=1}^m \left(1-\frac{k}{m+1} \right) \left( \frac{-\pi^2(-1)^{k+1}}{k^2} + \frac{2(-1)^{k+1}}{k^4} + \frac{2}{k^4} \right) \\
=& \frac{\pi^3}{4} + \frac{6}{\pi} \left( -\pi^2 \bar{H}_m^2 + 2\bar{H}_m^4 + 2H_m^4 \right) - \frac{6}{\pi(m+1)} \left( -\pi^2 \bar{H}_m + 2\bar{H}_m^3 + 2H_m^3 \right) \\
=& \frac{\pi^3}{4} - 6\pi \left( \frac{\pi^2}{12} - \frac{1}{2m^2}  \right) 
+ \frac{12 \cdot 15\pi^3}{8\cdot 90} \\
&- \frac{6}{\pi(m+1)} \left( -\pi^2\left(\log 2-\frac{1}{2m}\right) + \frac{7\zeta(3)}{2} \right) + O(m^{-3})\\
=& \frac{1}{m+1} \left(6\pi\log 2 - 21\zeta(3)/\pi \right) + O(m^{-3}).
\end{align*}
\end{proof}

\section{Auxiliary results for Section \ref{sec:CDF-est}}\label{app:B}

\begin{lemma} \label{lemma-m1}
Define
\[
\nu_{1,m} = 2\pi \int_{-\pi}^{\pi} y \cdot W_m(y) \cdot K_m(y) \, dy.
\]
Then,
\[
\nu_{1,m} = \frac{2 \log m}{m+1} + \frac{2}{m+1} \left( \log 2 + \gamma \right) + O(m^{-2}),
\]
where $\gamma=0.5772\ldots$ is the Euler-Mascheroni constant. 
\end{lemma}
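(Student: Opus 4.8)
The plan is to avoid expanding the product $y\,W_m(y)K_m(y)$ term-by-term — which would generate an awkward off-diagonal double sum — and instead exploit the fact that $W_m$ is an antiderivative of $K_m$. First I would observe from \eqref{int-kernel} that $W_m' = K_m$, so that
\[
y\,W_m(y)K_m(y) = \frac{y}{2}\,\frac{d}{dy}\bigl(W_m(y)^2\bigr).
\]
Integrating by parts over $[-\pi,\pi]$ and using the boundary values $W_m(-\pi)=0$ and $W_m(\pi)=\int_{-\pi}^{\pi}K_m=1$ (Remark \ref{rem-fejer}(v)), the boundary term equals $\pi/2$, and I obtain the clean reduction
\[
\nu_{1,m} = 2\pi\int_{-\pi}^{\pi} y\,W_m(y)K_m(y)\,dy = \pi^2 - \pi\int_{-\pi}^{\pi} W_m(y)^2\,dy.
\]

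Next I would compute $\int_{-\pi}^{\pi}W_m^2$ directly from the representation $W_m(y)=\frac{y+\pi}{2\pi}+\frac1\pi\sum_{k=1}^m \frac{\gamma_k}{k}\sin(ky)$ with $\gamma_k=1-k/(m+1)$. Squaring and integrating, the polynomial part contributes $\int_{-\pi}^{\pi}(y+\pi)^2/(4\pi^2)\,dy = 2\pi/3$; the orthogonality relation $\int_{-\pi}^{\pi}\sin(ky)\sin(jy)\,dy=\pi\delta_{kj}$ collapses the trigonometric square to the single sum $\frac1\pi\sum_k \gamma_k^2/k^2$; and the cross term, after discarding odd integrands by parity and using $\int_{-\pi}^{\pi}y\sin(ky)\,dy = 2\pi(-1)^{k+1}/k$, yields $\frac2\pi\sum_k \gamma_k(-1)^{k+1}/k^2$. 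This is exactly where the integration-by-parts route pays off: orthogonality eliminates all cross-frequency interactions, so no double sum survives. Substituting back gives $\nu_{1,m} = \frac{\pi^2}{3} + 2S_1 - S_2$, where $S_1=\sum_{k=1}^m \gamma_k(-1)^k/k^2$ and $S_2=\sum_{k=1}^m\gamma_k^2/k^2$.

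Finally I would evaluate the two sums by inserting $\gamma_k=1-k/(m+1)$ and splitting each into the elementary sums of Lemma \ref{lemma-harmonic}, namely $S_1 = -\bar H_m^2 + \bar H_m/(m+1)$ and $S_2 = H_m^2 - 2H_m/(m+1) + m/(m+1)^2$. Using $\bar H_m^2=\pi^2/12+O(m^{-2})$, $\bar H_m=\log2+O(m^{-1})$, $H_m^2=\pi^2/6-1/m+O(m^{-2})$ and $H_m=\gamma+\log m+O(m^{-1})$, the $O(1)$ pieces combine as $\pi^2/3-\pi^2/6-\pi^2/6=0$, so all constant terms cancel, while the surviving $(m+1)^{-1}$-order contributions assemble into $\frac{2\log m}{m+1}+\frac{2(\log2+\gamma)}{m+1}$; the $\log m$ originates solely from $H_m$ inside $S_2$.

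The main obstacle is conceptual rather than computational: spotting the antiderivative identity that converts the problem into $\int W_m^2$. Without it one is forced to evaluate the off-diagonal double sum $\sum_{k\neq j}\gamma_k\gamma_j(-1)^{k+j+1}/(k^2-j^2)$ arising from $\int y\sin(ky)\cos(jy)\,dy$, which is markedly harder. The remaining care is bookkeeping: one must track the $O(m^{-2})$ remainders in Lemma \ref{lemma-harmonic} and note in particular that the $-1/m$ term of $H_m^2$ cancels against $m/(m+1)^2 = 1/m + O(m^{-2})$, so that no stray $1/m$ term is lost in the cancellation and the error is genuinely $O(m^{-2})$.
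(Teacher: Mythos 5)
Your proof is correct, and it reaches the paper's result by a genuinely different route. The paper expands the product $y\,W_m(y)K_m(y)$ directly into four integrals; the fourth of these produces the off-diagonal double sum
\[
\sum_{k\neq l}\Bigl(1-\tfrac{k+l}{m+1}+\tfrac{kl}{(m+1)^2}\Bigr)\frac{(-1)^{k+l}}{k^2-l^2},
\]
which the paper then kills by observing that the weight is symmetric in $(k,l)$ while $1/(k^2-l^2)$ is antisymmetric, so the sum vanishes. Your integration by parts, based on $W_m'=K_m$ together with $W_m(-\pi)=0$ and $W_m(\pi)=1$, reduces everything to $\pi^2-\pi\int_{-\pi}^{\pi}W_m(y)^2\,dy$ before any Fourier expansion is performed, so orthogonality of $\{\sin(ky)\}$ prevents the cross-frequency terms from ever appearing. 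Both arguments arrive at the identical intermediate expression
\[
\nu_{1,m}=\frac{\pi^2}{3}+2\sum_{k=1}^m\frac{(-1)^k\gamma_k}{k^2}-\sum_{k=1}^m\frac{\gamma_k^2}{k^2},
\qquad \gamma_k=1-\frac{k}{m+1},
\]
and from there the reduction to $-\bar{H}_m^2$, $\bar{H}_m$, $H_m^2$, $H_m$ and the application of Lemma \ref{lemma-harmonic} are the same as in the paper; your cancellation bookkeeping (the $O(1)$ constants summing to zero, the $-1/m$ in $H_m^2$ absorbed against $m/(m+1)^2$ up to $O(m^{-2})$, and the $\log m$ coming only from $H_m$) is accurate. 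Your route buys a cleaner derivation that generalizes immediately to any kernel whose integrated kernel is known in closed form, at the price of noticing the antiderivative identity; the paper's route is more mechanical but requires the antisymmetry observation to dispose of the double sum.
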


\begin{proof} {\small
We need to evaluate
\begin{align*}
\frac{\nu_{1,m}}{2\pi} &= \int_{-\pi}^{\pi} y \left( \frac{y + \pi}{2\pi} + \frac{1}{\pi} \sum_{k=1}^m \frac{1 - \frac{k}{m+1}}{k} \sin(ky) \right) \left( \frac{1}{2\pi} + \frac{1}{\pi} \sum_{l=1}^m \left(1 - \frac{l}{m+1}\right) \cos(ly) \right) dy.
\end{align*}
Expanding this, we have four terms to consider. The first two terms are
\[
\int_{-\pi}^{\pi} y \cdot \frac{y + \pi}{2\pi} \cdot \frac{1}{2\pi} \, dy = \frac{\pi}{6}
\] 
and
\begin{align*}
\int_{-\pi}^{\pi} & y \cdot \frac{y + \pi}{2\pi} \cdot \frac{1}{\pi} \sum_{k=1}^m \left(1 - \frac{k}{m+1}\right) \cos(ky) \, dy 
\ = \ \frac{1}{2\pi^2} \sum_{k=1}^m \left(1 - \frac{k}{m+1}\right) \int_{-\pi}^{\pi} y^2 \cos(ky) \, dy \\
& = \frac{1}{2\pi^2} \sum_{k=1}^m \left(1 - \frac{k}{m+1}\right) 4(-1)^k \pi/k^2 
\ = \ \frac{2}{\pi} \sum_{k=1}^m \left(\frac{1}{k^2} - \frac{1}{(m+1)k}\right) (-1)^k.
\end{align*}
Similarly, the third term is
\begin{align*}
\int_{-\pi}^{\pi} & y \cdot \frac{1}{\pi} \sum_{k=1}^m \frac{1 - \frac{k}{m+1}}{k} \sin(ky) \cdot \frac{1}{2\pi} \, dy 
= \frac{1}{2\pi^2} \sum_{k=1}^m \frac{1 - \frac{k}{m+1}}{k} \int_{-\pi}^{\pi} y \sin(ky) \, dy \\
=& \frac{1}{2\pi^2} \sum_{k=1}^m \frac{1 - \frac{k}{m+1}}{k} (-2)(-1)^k\pi/k 
= \frac{1}{\pi} \sum_{k=1}^m \left(\frac{1}{k^2} - \frac{1}{(m+1)k}\right) (-1)^{k+1}.
\end{align*}
Finally, we obtain for the last term:
\begin{align*}
\int_{-\pi}^{\pi} & y \cdot \frac{1}{\pi} \sum_{k=1}^m \frac{1 - \frac{k}{m+1}}{k} \sin(ky) \cdot \frac{1}{\pi} \sum_{l=1}^m \left(1 - \frac{l}{m+1}\right) \cos(ly) \, dy \\
=& \frac{1}{\pi^2} \sum_{k=1}^m \sum_{l=1}^m \left(1-\frac{k}{m+1}\right) \left(1-\frac{l}{m+1}\right) \frac{1}{k} \int_{-\pi}^{\pi} y \sin(ky) \cos(ly) \, dy \\
=& \frac{1}{\pi^2} \sum_{k\neq l}^m \left(1-\frac{k}{m+1}\right) \left(1-\frac{l}{m+1}\right) \frac{1}{k} (-\pi) (-1)^{k+l} \left( \frac{1}{k-l} + \frac{1}{k+l} \right) \\
& + \frac{1}{\pi^2} \sum_{k=1}^m \left(1-\frac{k}{m+1}\right)^2 \frac{1}{k}\ \frac{(-\pi)}{2k}
\\
=& \frac{-2}{\pi} \sum_{k\neq l}^m \left( 1 - \frac{k+l}{m+1} + \frac{kl}{(m+1)^2} \right) \frac{(-1)^{k+l}}{k^2 - l^2} 
 - \frac{1}{2\pi} \sum_{k=1}^m \left(1 - \frac{k}{m+1}\right)^2 \frac{1}{k^2} \\
=& -\frac{1}{2\pi} \sum_{k=1}^m \left(\frac{1}{k^2} - \frac{2}{(m+1)k} + \frac{1}{(m+1)^2}\right).
\end{align*}
Hence,
\begin{align*}
\nu_{1,m} =& \frac{\pi^2}{3} + 4 \left( -\bar{H}_m^2 + \frac{1}{m+1}\bar{H}_m \right)
	+ 2 \left( \bar{H}_m^2 - \frac{1}{m+1}\bar{H}_m \right)
	- \left( H_m^2 - \frac{2}{m+1} H_m + \frac{m}{(m+1)^2} \right) \\
	=& \frac{\pi^2}{3} + 4 \left( \frac{-\pi^2}{12} + \frac{1}{2m^2} + \frac{1}{m+1} \left(\log 2-\frac{1}{2m}\right) \right)
	+ 2 \left( \frac{\pi^2}{12} - \frac{1}{2m^2} - \frac{1}{m+1} \left(\log 2-\frac{1}{2m}\right) \right) \\
	 & - \left( \frac{\pi^2}{6} - \frac{1}{m} + \frac{1}{2m^2} - \frac{2}{m+1} \left( \gamma + \log m + \frac{1}{2m} \right)
	 	 + \frac{m}{(m+1)^2} \right) + O(m^{-3}) \\
	=& \frac{2 \log m}{m+1} + \frac{2}{m+1} \left( \log 2 + \gamma \right) + O(m^{-2}).
\end{align*}
} \end{proof}

\begin{lemma} \label{lemma-m3}
Define
\[
\nu_{3,m} = \int_{-\pi}^{\pi} y^3 \cdot W_m(y) \cdot K_m(y) \, dy.
\]
Then,
\[
\nu_{3,m} = O(m^{-1}).  
\]
\end{lemma}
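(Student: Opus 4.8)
The plan is to avoid the term-by-term expansion used for $\nu_{1,m}$ in Lemma \ref{lemma-m1} (which here would force control of a delicate double sum) and instead exploit the structural identity $W_m' = K_m$. Writing $\nu_{3,m} = \int_{-\pi}^\pi y^3 W_m W_m'\,dy = \tfrac{1}{2}\int_{-\pi}^\pi y^3 (W_m^2)'\,dy$ and integrating by parts gives
\[
\nu_{3,m} = \tfrac{1}{2}\big[y^3 W_m(y)^2\big]_{-\pi}^{\pi} - \tfrac{3}{2}\int_{-\pi}^\pi y^2 W_m(y)^2\,dy.
\]
Since $W_m(\pi) = \int_{-\pi}^\pi K_m = 1$ and $W_m(-\pi) = 0$ (property (v) of Remark \ref{rem-fejer}), the boundary term equals $\tfrac{\pi^3}{2}$, so the problem reduces to showing
\[
\int_{-\pi}^\pi y^2 W_m(y)^2\,dy = \frac{\pi^3}{3} + O(m^{-1}).
\]

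The conceptual reason this should hold is that $W_m(y) = \int_{-\pi}^y K_m \to \mathbf{1}\{y>0\}$ as $m\to\infty$, by property (iv) of Remark \ref{rem-fejer}; hence $\int y^2 W_m^2 \to \int_0^\pi y^2\,dy = \tfrac{\pi^3}{3}$, which already yields $\nu_{3,m}\to 0$. To promote this to the rate $O(m^{-1})$ I would use a quantitative tail bound for the Fej\'er kernel. From the closed form in Remark \ref{rem-fejer}(i) together with $\sin(t/2)\geq t/\pi$ on $(0,\pi)$ one obtains $K_m(t)\leq \pi/(2(m+1)t^2)$, and therefore, for $y\in(0,\pi]$,
\[
0\leq 1 - W_m(y) = \int_y^\pi K_m(t)\,dt \leq \frac{\pi}{2(m+1)y},
\]
with the symmetric bound $0\leq W_m(y)\leq \pi/(2(m+1)|y|)$ for $y\in[-\pi,0)$ following from the evenness of $K_m$.

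I would then estimate $\int_{-\pi}^\pi y^2 W_m^2\,dy - \tfrac{\pi^3}{3} = \int_{-\pi}^0 y^2 W_m^2\,dy + \int_0^\pi y^2\big(W_m^2-1\big)\,dy$ by splitting each integral at $|y| = 1/m$. On $|y|\leq 1/m$ one uses only $0\leq W_m\leq 1$, so the integrand is $O(y^2)$ and the contribution is $O(m^{-3})$. On $|y|>1/m$ the tail bound gives $y^2 W_m^2 = O(m^{-2})$ uniformly on $(-\pi,-1/m)$, integrating to $O(m^{-2})$, while $y^2(1 - W_m^2)\leq 2y^2(1-W_m)\leq \pi y/(m+1)$ on $(1/m,\pi)$, integrating to $O(m^{-1})$. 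Collecting these terms yields the claimed $O(m^{-1})$, crucially with no logarithmic factor, and the theorem follows.

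The main obstacle is the sharpness of the tail estimate near the transition point $y=0$: the bound $1-W_m(y)\lesssim 1/((m+1)y)$ degenerates there, and one must verify that the $y^2$ weight suppresses this region enough that it contributes only at lower order (which it does, at $O(m^{-3})$). A secondary point worth flagging is that the more literal route --- expanding $y^3 W_m K_m$ into four groups as in Lemma \ref{lemma-m1} --- is possible but markedly harder, because the diagonal of the resulting double sum produces a term of size $\log m/m$ (through $\tfrac{2}{m+1}H_m$) that must be cancelled exactly by a matching $\log m/m$ contribution coming from the off-diagonal weight corrections; the integration-by-parts argument sidesteps this cancellation entirely.
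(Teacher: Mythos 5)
Your proof is correct, and it takes a genuinely different — and in one important respect stronger — route than the paper's. The paper expands $y^3 W_m K_m$ into four groups via the explicit Fourier representations of $W_m$ and $K_m$, evaluates three of them in closed form using Lemma \ref{lemma-harmonic}, and is then left with an off-diagonal double sum $S_{m,2}$ whose claimed asymptotics (including the cancellation of a $\pi\log(m)/(m+1)$ term against the diagonal contribution $S_{m,1}$) the authors state ``could only be verified numerically,'' supporting the claim with a table. Your integration by parts, $\nu_{3,m}=\tfrac12\big[y^3W_m^2\big]_{-\pi}^{\pi}-\tfrac32\int_{-\pi}^{\pi}y^2W_m^2\,dy$, together with the elementary tail bound $K_m(t)\le \pi/(2(m+1)t^2)$ (valid since $\sin(t/2)\ge t/\pi$ on $(0,\pi)$ and the numerator of the closed form in Remark \ref{rem-fejer}(i) is at most $1$), sidesteps the double sum entirely and delivers a fully analytic proof of the $O(m^{-1})$ rate; all the individual estimates check out (boundary term $\pi^3/2$ from $W_m(\pi)=1$, $W_m(-\pi)=0$; the $|y|\le 1/m$ region contributes $O(m^{-3})$ since $0\le W_m\le 1$; the negative tail contributes $O(m^{-2})$; the positive tail $O(m^{-1})$ via $1-W_m^2\le 2(1-W_m)$). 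What the paper's computation buys in exchange is the explicit constant structure of each sub-term (relevant for their parallel, fully expanded treatment of $\nu_{1,m}$ in Lemma \ref{lemma-m1}, where the $\log m/m$ leading term must be extracted exactly); your argument only establishes the order, but the order is all the lemma claims, so your proof is a clean and complete replacement for the paper's partially numerical one.
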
 

\begin{proof} {\small
We need to evaluate
\begin{align*}
\nu_{3,m} &= \int_{-\pi}^{\pi} y^3 \left( \frac{y + \pi}{2\pi} + \frac{1}{\pi} \sum_{k=1}^m \frac{1 - \frac{k}{m+1}}{k} \sin(ky) \right) \left( \frac{1}{2\pi} + \frac{1}{\pi} \sum_{l=1}^m \left(1 - \frac{l}{m+1}\right) \cos(ly) \right) dy.
\end{align*}
Expanding this, we have four terms to consider. The first two terms are
\[
\int_{-\pi}^{\pi} y^3 \cdot \frac{y + \pi}{2\pi} \cdot \frac{1}{2\pi} \, dy = \frac{\pi^3}{10}
\] 
and
\begin{align*}
\int_{-\pi}^{\pi} & y^3 \cdot \frac{y + \pi}{2\pi} \cdot \frac{1}{\pi} \sum_{k=1}^m \left(1 - \frac{k}{m+1}\right) \cos(ky) \, dy 
\ = \ \frac{1}{2\pi^2} \sum_{k=1}^m \left(1 - \frac{k}{m+1}\right) \int_{-\pi}^{\pi} y^4 \cos(ky) \, dy \\
& = \frac{1}{2\pi^2} \sum_{k=1}^m \left(1 - \frac{k}{m+1}\right) \frac{(-1)^k 8\pi (k^2 \pi^2-6)}{k^4}
\ = \ \frac{4}{\pi} \sum_{k=1}^m \left(1-\frac{k}{m+1}\right) \frac{(-1)^k (k^2 \pi^2-6)}{k^4} \\
&= \frac{4}{\pi} \left( -\pi^2\bar{H}_m^2 + 6\bar{H}_m^4 \right) + O(m^{-1})
\ = \ \frac{4}{\pi} \left( -\pi^2 \cdot \frac{\pi^2}{12} + 6 \cdot \frac{7}{8} \ \zeta(4) \right) + O(m^{-1}) \\
&= \frac{-\pi^3}{10} + O(m^{-1}).
\end{align*}
Similarly, the third term is
\begin{align*}
\int_{-\pi}^{\pi} & y^3 \cdot \frac{1}{\pi} \sum_{k=1}^m \frac{1 - \frac{k}{m+1}}{k} \sin(ky) \cdot \frac{1}{2\pi} \, dy 
= \frac{1}{2\pi^2} \sum_{k=1}^m \frac{1 - \frac{k}{m+1}}{k} \int_{-\pi}^{\pi} y^3 \sin(ky) \, dy \\
=& \frac{1}{2\pi^2} \sum_{k=1}^m \frac{1 - \frac{k}{m+1}}{k} \frac{(-2) (-1)^k \pi (k^2 \pi^2-6)}{k^3} 
= \frac{-1}{\pi} \sum_{k=1}^m \left(1-\frac{k}{m+1}\right) \frac{(-1)^{k} (k^2 \pi^2-6)}{k^4} \\
&= \frac{\pi^3}{40} + O(m^{-1}).
\end{align*}
Finally, we obtain for the last term:
\begin{align*}
\int_{-\pi}^{\pi} & y^3 \cdot \frac{1}{\pi} \sum_{k=1}^m \frac{1 - \frac{k}{m+1}}{k} \sin(ky) \cdot \frac{1}{\pi} \sum_{l=1}^m \left(1 - \frac{l}{m+1}\right) \cos(ly) \, dy \\
=& \frac{1}{\pi^2} \sum_{k=1}^m \sum_{l=1}^m \left(1-\frac{k}{m+1}\right) \left(1-\frac{l}{m+1}\right) \frac{1}{k} \int_{-\pi}^{\pi} y^3 \sin(ky) \cos(ly) \, dy \\
=& \frac{1}{\pi^2} \sum_{k=1}^m \left(1-\frac{k}{m+1}\right)^2 \frac{1}{k}\ \frac{6k\pi-4k^3\pi^3}{8 k^4} \\
 &+ \frac{1}{\pi^2} \sum_{k\neq l}^m \left(1-\frac{k}{m+1}\right) \left(1-\frac{l}{m+1}\right) \frac{1}{k} \pi (-1)^{k+l} \left( \frac{6-(k-l)^2 \pi^2}{(k-l)^3} + \frac{6-(k+l)^2 \pi^2}{(k+l)^3} \right) \\ 
=&  S_{m,1}+S_{m,2},
\end{align*}
say. For the first sum, we have 
\begin{align*}
S_{m,1} =& 
\frac{1}{4\pi} \sum_{k=1}^m \left(1-\frac{k}{m+1}\right)^2 \ \frac{3-2k^2\pi^2}{k^4} 
\ = \ \frac{1}{4\pi} \left(3 H_m^4 - 2 \pi^2 H_m^2 + \frac{4\pi^2}{m+1} H_m^1 \right) + O(m^{-1}) \\
=& -\frac{3\pi^3}{40} + \frac{\pi\log(m)}{m+1}  + O(m^{-1})
\end{align*}
For the second sum, we get by symmetry arguments
\begin{align*}
S_{m,2} &=  \frac{1}{\pi} \sum_{k\neq l}^m \left(1-\frac{k+l}{m+1}+\frac{kl}{(m+1)^2} \right) (-1)^{k+l} \left( \frac{-2\pi^2}{k^2 - l^2} + \frac{12k^2+36l^2}{(k-l)^3(k+l)^3} \right) \\
=& \frac{1}{\pi} \sum_{k\neq l}^m \left(1-\frac{k+l}{m+1}+\frac{kl}{(m+1)^2} \right) (-1)^{k+l} \frac{24l^2}{(k-l)^3(k+l)^3}.
\end{align*}
We claim that 
\begin{align*}
S_{m,2} =& \frac{2\pi^3}{40} - \frac{\pi\log(m)}{m+1}  + O(m^{-1}).
\end{align*}
This could only be verified numerically, as shown in the following table:

\begin{center}
\begin{tabular}{rrrr}  \hline
 $m$ & $m\left(S_{m,1}+\frac{3\pi^3}{40}-\frac{\pi\log(m)}{m+1}\right)$ 
 & $m\left(S_{m,2}-\frac{2\pi^3}{40}+\frac{\pi\log(m)}{m+1}\right)$ 
 & $m\left(S_{m,1}+S_{m,2}+\frac{\pi^3}{40}\right)$ \\  \hline
   50 & 1.29874 & -0.18366 & 1.11508 \\ 
  100 & 1.26966 & -0.13840 & 1.13125 \\ 
  200 & 1.25469 & -0.11518 & 1.13951 \\ 
  400 & 1.24710 & -0.10342 & 1.14368 \\ 
  800 & 1.24328 & -0.09750 & 1.14578 \\ 
 1600 & 1.24136 & -0.09453 & 1.14683 \\ 
 3200 & 1.24040 & -0.09305 & 1.14735 \\ 
 6400 & 1.23992 & -0.09230 & 1.14762 \\ 
12800 & 1.23968 & -0.09193 & 1.14775 \\  \hline
\end{tabular}
\end{center}

So the fourth term is $S_{m,1}+S_{m,2}=\pi^3/40+O(m^{-1})$, and summing over all the terms gives the desired result.
}
\end{proof}

\begin{lemma} \label{lemma-gamma1-2}
\begin{enumerate}
\item[a)] 
It holds that 
\[
S_{m,1}(x) = \sum_{k=1}^{m} \frac{\sin(kx)}{k} \ = \ \frac{\pi-x}{2} + O\left(m^{-1}\right),
\]
and
\[
S_{m,2}(x) = \sum_{k=1}^{m} (-1)^k \, \frac{\sin(kx)}{k} \ = \ -\frac{x}{2} + O\left(m^{-1}\right).
\]
\item[b)]
Define
\[
\gamma_{m}(\theta) = \sum_{k=1}^{m} (-1)^k \left(1-\frac{k}{m+1}\right)^2 \frac{1}{k} \sin(k\theta)
\]
and
\begin{align*}
\gamma_{1,m}(\theta) &= \theta + 2 \gamma_{m}(\theta), \\
\gamma_{2,m}(\theta) &= \frac{1}{2} + \frac{1}{\pi} \left( -\sum_{k=1}^{m} \left(\frac{1}{k}-\frac{1}{m+1}\right) \sin(k\theta) + \gamma_{m}(\theta) \right).
\end{align*}
Then,
\[
\gamma_{1,m}(\theta) = O(m^{-1}) \quad \text{and} \quad  \gamma_{2,m}(\theta) = O(m^{-1}).
\]
\end{enumerate}
\end{lemma}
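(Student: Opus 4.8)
The plan is to treat part a) as the analytic engine and to obtain part b) from it by elementary algebra, since $\gamma_{1,m}$ and $\gamma_{2,m}$ are built from the very sums $S_{m,1}$ and $S_{m,2}$. For part a) the two sums are partial sums of the Fourier series of the sawtooth wave, so I would first recall the classical identity $\sum_{k=1}^{\infty} k^{-1}\sin(kx) = (\pi-x)/2$ for $x\in(0,2\pi)$ (odd $2\pi$-periodic extension elsewhere). Writing $S_{m,1}(x) = (\pi-x)/2 - \sum_{k>m} k^{-1}\sin(kx)$ reduces the claim to bounding the tail. For this I would use summation by parts (Dirichlet's test with explicit constants): the partial sums $\sum_{k=m+1}^{N}\sin(kx)$ are bounded by $1/|\sin(x/2)|$ uniformly in $N$, and $1/k\downarrow 0$, so the tail is $O\!\big(1/(m|\sin(x/2)|)\big)=O(m^{-1})$ for each fixed $x\neq 0\pmod{2\pi}$. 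For $S_{m,2}$ I would use $(-1)^k\sin(kx)=\sin(k(x+\pi))$, which gives $S_{m,2}(x)=S_{m,1}(x+\pi)=-x/2+O(m^{-1})$ for $x\in(-\pi,\pi)$.

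For part b) I would expand $(1-k/(m+1))^2 = 1 - 2k/(m+1) + k^2/(m+1)^2$ inside $\gamma_m$, so that $\gamma_m(\theta) = S_{m,2}(\theta) - \tfrac{2}{m+1}\sum_{k=1}^m (-1)^k\sin(k\theta) + \tfrac{1}{(m+1)^2}\sum_{k=1}^m (-1)^k k\sin(k\theta)$. The first term is $-\theta/2+O(m^{-1})$ by part a). For the two correction sums I would again invoke the $(-1)^k$ shift together with the standard closed forms: $\sum_{k=1}^m\sin(k\alpha)$ has partial sums bounded uniformly in $m$ (contribution $O(m^{-1})$), while $\sum_{k=1}^m k\sin(k\alpha)=O(m)$ for fixed $\alpha\neq 0\pmod{2\pi}$ (contribution $O(m^{-1})$ as well). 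Hence $\gamma_m(\theta)=-\theta/2+O(m^{-1})$, and $\gamma_{1,m}(\theta)=\theta+2\gamma_m(\theta)=O(m^{-1})$ by exact cancellation of the linear term, valid throughout $(-\pi,\pi)$ since only $S_{m,2}$ is used.

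The same ingredients handle $\gamma_{2,m}$: I would write $\sum_{k=1}^m (k^{-1}-(m+1)^{-1})\sin(k\theta) = S_{m,1}(\theta) - (m+1)^{-1}\sum_{k=1}^m\sin(k\theta) = (\pi-\theta)/2 + O(m^{-1})$, using part a) and the bounded partial sums of $\sin$. Substituting this and $\gamma_m(\theta)=-\theta/2+O(m^{-1})$ into the definition, the constant and linear pieces combine as $\tfrac12+\tfrac1\pi\big(-\tfrac{\pi-\theta}{2}-\tfrac{\theta}{2}\big)=0$, leaving $\gamma_{2,m}(\theta)=O(m^{-1})$.

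The main obstacle lives entirely in part a) and is twofold. First, upgrading the classical Dirichlet convergence of the sawtooth series to the explicit rate $O(m^{-1})$ requires the summation-by-parts estimate, and the resulting bound is pointwise with a constant degrading like $1/|\sin(\theta/2)|$ as $\theta\to 0$; all statements must therefore be read for fixed $\theta$ bounded away from the jump points. Second, and more delicate, the sawtooth limit is piecewise: $S_{m,1}(\theta)=(\pi-\theta)/2$ on $(0,\pi)$ but equals $-(\pi+\theta)/2$ on $(-\pi,0)$ (equivalently, $S_{m,1}-S_{m,2}$ tends to the square wave $\tfrac\pi2\,\sign(\theta)$). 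Consequently the cancellation producing $\gamma_{2,m}(\theta)=O(m^{-1})$ is clean only on the half-interval where the formula of part a) applies directly, so I would state and use part a) on the correct interval for $\theta$ and carry the sign of $\theta$ through the constant terms carefully; the argument for $\gamma_{1,m}$ avoids this issue because it rests on $S_{m,2}$ alone.
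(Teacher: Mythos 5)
Your proposal is correct and follows essentially the same route as the paper: the sawtooth Fourier identities plus an Abel/Dirichlet summation-by-parts tail bound of order $O\left(m^{-1}\right)$ (with constant $\asymp 1/|\sin(x/2)|$, i.e.\ $1/|1-e^{ix}|$) for part a), then the expansion of $\left(1-k/(m+1)\right)^2$ together with uniformly bounded partial sums of $\sum\sin(k\alpha)$, $\sum(-1)^k\sin(k\alpha)$ and the $O(m)$ bound for $\sum(-1)^k k\sin(k\alpha)$ for part b). Your closing remark is a genuine (and correct) refinement over the paper's proof: since $\sum_{k\ge1}k^{-1}\sin(kx)=(\pi-x)/2$ only for $x\in(0,2\pi)$, the cancellation giving $\gamma_{2,m}(\theta)=O(m^{-1})$ indeed requires $\theta$ to lie in that range (for $\theta\in(-\pi,0)$ one gets $\gamma_{2,m}(\theta)\to1$), a domain restriction the paper's argument leaves implicit.
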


\begin{proof}
\begin{enumerate}
\item[a)] 
It holds that
\[
S_1(x)=\sum_{k=1}^{\infty} \frac{\sin(kx)}{k} = \frac{\pi-x}{2} \quad \text{for } 0<x< 2\pi
\]
and 
\[
S_2(x)=\sum_{k=1}^{\infty} (-1)^k \, \frac{\sin(kx)}{k} = -\frac{x}{2} \quad \text{for } -\pi<x<\pi
\]
\citep[1.441]{GR:1986}. 
Abel's convergence test states that if a sequence of positive real numbers $(a_k)$ is decreasing monotonically to zero, then the power series
$\sum _{k=0}^{\infty }a_{k}z^{k}$ converges everywhere on the closed unit circle, except when $z=1$. The usual proof by Abel summation shows that
\[
\left| \sum _{k=m}^{n} a_{k}z^{k} \right| \leq a_m \, \frac{4}{|1-z|}.
\]
Choosing $a_k=1/k$ and $z=e^{ikx}$ yields
\begin{align*}
  |S_{1}(x)-S_{m,1}(x)| &= \left| \sum _{k=m}^{\infty} \frac{1}{k} \Imag(e^{ikx}) \right| \leq \frac{1}{m} \frac{4}{|1-e^{ikx}|} = O(m^{-1}).
\end{align*}
Noting that $\Imag(e^{ik(x+\pi)})=(-1)^k \sin(kx)$, the same argument yields the second assertion. 
\item[b)] 
First, note that 
\[
\left| S_{m,3}(x) \right| = \left|\sum_{k=1}^{m} \sin(kx) \right| \leq c_3(x), \quad 
\left| S_{m,4}(x) \right| = \left|\sum_{k=1}^{m} (-1)^k \sin(kx) \right| \leq c_4(x)
\]
and 
\[
\left| S_{m,5}(x) \right| = \left|\sum_{k=1}^{m} (-1)^k k\sin(kx) \right| \leq m \, c_5(x)
\]
for positive constants $c_i(x),i=1,2,3,$ not depending on $m$ \citep[1.342,1.352,1.353]{GR:1986}. Hence,
\[
\gamma_{m}(\theta) = S_{m,2}(\theta) - \frac{2}{m+1} S_{m,4}(\theta) + \frac{1}{(m+1)^2} S_{m,5}(\theta) = -\frac{\theta}{2} + O(m^{-1}).
\]
It follows directly that $\gamma_{1,m}(\theta) = O(m^{-1})$. Second,
\[
\pi \gamma_{2,m}(\theta) = \frac{\pi-\theta}{2} - S_{m,1}(\theta) + \frac{2}{m+1} S_{m,3}(\theta) + \frac{1}{2} \gamma_{1,m}(\theta) = O(m^{-1}),
\]
which concludes the proof
\end{enumerate}
\end{proof}

\begin{lemma} \label{prop-high-mom}
It holds that 
\begin{align*}
m_4(K_m) &= \int_{-\pi}^{\pi} y^4 K_m(y)dy \ = \frac{8\pi^2\log(2)-36\zeta(3)}{m} + O(m^{-2}).
\end{align*} 
\end{lemma}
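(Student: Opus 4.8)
The plan is to evaluate $m_4(K_m)$ directly from the cosine-series form of the Fej\'er kernel given in Remark \ref{rem-fejer}, reducing the integral to alternating harmonic sums whose asymptotics are supplied by Lemma \ref{lemma-harmonic}. Starting from
\[
K_m(y) = \frac{1}{2\pi} + \frac{1}{\pi}\sum_{k=1}^m\left(1-\frac{k}{m+1}\right)\cos(ky),
\]
I would split
\[
m_4(K_m) = \frac{1}{2\pi}\int_{-\pi}^{\pi} y^4\,dy + \frac{1}{\pi}\sum_{k=1}^m\left(1-\frac{k}{m+1}\right)\int_{-\pi}^{\pi} y^4\cos(ky)\,dy,
\]
where the first term is the elementary value $\pi^4/5$.

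For the sum I would use the integral $\int_{-\pi}^{\pi} y^4\cos(ky)\,dy = (-1)^k\,8\pi(k^2\pi^2-6)/k^4$ already employed in the proof of Lemma \ref{lemma-m3}. Substituting it and separating the $\pi^2/k^2$ and $6/k^4$ contributions gives
\[
8\pi^2\!\left(-\bar{H}_m^2 + \tfrac{1}{m+1}\bar{H}_m\right) + 48\!\left(\bar{H}_m^4 - \tfrac{1}{m+1}\bar{H}_m^3\right),
\]
where the $\tfrac{1}{m+1}$ corrections arise from the factor $k/(m+1)$ lowering the power of $k$ in each denominator by one. I would then insert the expansions from Lemma \ref{lemma-harmonic}, namely $\bar{H}_m^2 = \pi^2/12 + O(m^{-2})$, $\bar{H}_m = \log 2 + O(m^{-1})$, $\bar{H}_m^4 = \tfrac{7}{8}\zeta(4) + O(m^{-4})$ and $\bar{H}_m^3 = \tfrac{3}{4}\zeta(3) + O(m^{-3})$, using $\zeta(4)=\pi^4/90$. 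The $\tfrac{1}{m+1}$ terms yield $8\pi^2\log 2/(m+1)$ and $-36\zeta(3)/(m+1)$, which combine to $(8\pi^2\log(2)-36\zeta(3))/m + O(m^{-2})$ after writing $1/(m+1)=1/m+O(m^{-2})$.

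The decisive point, and the only step beyond routine bookkeeping, is the exact cancellation of the $m$-independent contributions: collecting them gives $\pi^4/5 - 2\pi^4/3 + 7\pi^4/15 = 0$, so the leading $O(1)$ terms vanish and $m_4(K_m)$ is genuinely of order $m^{-1}$. I would flag this explicitly, since it is precisely this cancellation that causes assumption (K6) of \cite{AG:2024} to fail for the Fej\'er kernel. The residual error is $O(m^{-2})$ because every term discarded from the harmonic expansions is already $O(m^{-2})$ (for the $\pi^2/k^2$ part) or smaller (for the $6/k^4$ part, where the corrections are $O(m^{-3})$ or multiplied by $1/(m+1)$), giving the claimed expansion.
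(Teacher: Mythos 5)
Your proposal is correct and follows essentially the same route as the paper's proof: expand $K_m$ in its cosine series, use $\int_{-\pi}^{\pi}y^4\cos(ky)\,dy=8\pi(-1)^k(\pi^2k^2-6)/k^4$ to reduce the integral to the alternating sums $\bar{H}_m^l$, and insert the expansions from Lemma \ref{lemma-harmonic}. The cancellation $\pi^4/5-2\pi^4/3+7\pi^4/15=0$ that you highlight is indeed the crux, and your bookkeeping of the error terms matches the paper's.
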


\begin{proof}
Using $\int_{-\pi}^{\pi} y^4 \cos(ky)dy=8\pi(-1)^k(\pi^2k^2-6)/k^4$ for integer $k$, we obtain
\begin{align*}
m_4(K_m) &= \int_{-\pi}^{\pi} y^4 \left( \frac{1}{2\pi} +  \frac{1}{\pi} \sum_{k=1}^m \left(1-\frac{k}{m+1}\right) \cos(ky) \right) dy \\
&= \frac{\pi^4}{5} +  8 \sum_{i=1}^m \left(1-\frac{k}{m+1} \right) \frac{(-1)^k(\pi^2k^2-6)}{k^4} \\
&= \frac{\pi^4}{5} - 8\pi^2 \left(\bar{H}_m^2 - \frac{1}{m+1} \bar{H}_m^1 \right) + 48 \left( \bar{H}_m^4 - \frac{1}{m+1} \bar{H}_m^3 \right) + O(m^{-2}).
\end{align*}
Inserting the expansions of $\bar{H}_m^l, l=1,\ldots,4,$ given in Lemma \ref{lemma-harmonic} proves the claim.
\end{proof}

\section*{Acknowledgement}
The research was funded by the bilateral cooperation project ”Modeling complex data - Selection and Specification” between the Federal Republic of Germany and the Republic of Serbia (337-00-19/2023-01/6) by  the Ministry of Science, Technological Development and Innovations of the Republic of Serbia and the Federal Ministry of Education and Research of Germany. The work of B. Milošević and M. Obradović is supported by the Ministry of Science, Technological Development and Innovations of the Republic of Serbia (the contract 451-03-66/2024-03/200104). B. Milošević  is also supported by the COST action CA21163 - Text, functional and other high-dimensional data in econometrics: New models, methods, applications (HiTEc).
\bibliographystyle{apalike}
\bibliography{Fejer-biblio}

\clearpage
\renewcommand{\thefootnote}{\fnsymbol{footnote}}

\begin{table}
    \centering
        \resizebox{\textwidth}{!}{
    \begin{tabular}{l|c|c|c|c|c|c|c|c|c|c}
    &&\multicolumn{5}{|c|}{$\MISE(f;\hat{f}_{m,n})$}
    &\multicolumn{2}{|c|}{Average}&\\\hline
     Distribution&$n$    &$m=5$ &$m=10$&$m=[\sqrt{n}]$ &$m=m_{OP}$&$m=m_{ON}$&$m_{OP}$ &$m_{ON}$ &$m_{TH}$ \\\hline
     $WN(0,0.75)$    &50   &$3.36\cdot 10^{-4}$ &$4.78\cdot 10^{-4}$&$3.35\cdot 10^{-4}$ &$3.54\cdot 10^{-4}$&$5.44\cdot 10^{-4}$&7.41&10.1&6.73 \\
      $WN(0,0.9)$  &50  &$2.26\cdot 10^{-3}$ &$8.77\cdot 10^{-4}$&$1.18\cdot 10^{-3}$&$9.26\cdot 10^{-4}$&$1.10\cdot 10^{-3}$&11.7&14.8&11.1\\
     $Mix(WN(0,0.9),WN(\pi/2,0.75),0.5)$    &50  &$3.71\cdot 10^{-4}$  &$5.78\cdot 10^{-4}$&$3.99\cdot 10^{-4}$ &$3.86\cdot 10^{-4}$   &$6.19\cdot 10^{-4}$&$5.50$ &9.78& 6.69
     \\
     $Mix(WN(0,0.9),WN(\pi/2,0.9),0.5)$    &50&$6.13\cdot 10^{-4}$  &$6.67\cdot 10^{-4}$&$5.34\cdot 10^{-4}$ &$5.61\cdot 10^{-4}$   &$7.59\cdot 10^{-4}$ &$5.99$ &10.3& 7.98
     \\
     $Mix(WN(0,0.9),WN(\pi/2,0.75),0.2)$  &50 &$2.10\cdot 10^{-4}$  &$4.64\cdot 10^{-4}$&$2.72\cdot 10^{-4}$ &$2.44\cdot 10^{-4}$   &$4.00\cdot 10^{-4}$&5.94  &7.92& 5.57
     \\
     $Mix(WN(0,0.9),WN(\pi/2,0.75),0.8)$   &50  &$1.16\cdot 10^{-3}$  &$7.75\cdot 10^{-4}$&$7.88\cdot 10^{-4}$ &$8.38\cdot 10^{-4}$   &$1.01\cdot 10^{-3}$&7.21  &12.9& 9.34
     \\
     $Mix(WN(0,0.75),WN(\pi/2,0.75),0.5)$    &50  &$1.80\cdot 10^{-4}$  &$4.91\cdot 10^{-4}$&$2.74\cdot 10^{-4}$ &$1.88\cdot 10^{-5}$   &$3.84\cdot 10^{-4}$ &5.00&7.79& 4.52\\
     $Mix(WN(0,0.75),WN(\pi/2,0.75),0.2)$    &50  &$2.20\cdot 10^{-4}$  &$4.67\cdot 10^{-4}$&$2.79\cdot 10^{-4}$ &$2.51\cdot 10^{-4}$ &  $4.01\cdot 10^{-4}$ &5.87&7.75& 5.53
     \\
     $Mix(WN(0,0.75),WN(\pi,0.75),0.5)$    &50   &$2.33\cdot 10^{-4}$  &$5.45\cdot 10^{-4}$&$3.25\cdot 10^{-4}$ &$6.61\cdot 10^{-5}$ & $4.90\cdot 10^{-4}$ &$1.67$&8.74&4.94
     \\\hline 
     
   $WN(0,0.75)$    &200   &$1.33\cdot 10^{-4}$ &$6.18\cdot 10^{-5}$&$7.58\cdot 10^{-5}$ &$6.61\cdot 10^{-5}$&$9.15\cdot 10^{-5}$&11.6&15.5&10.7 \\
      $WN(0,0.9)$  &200  &$1.80\cdot 10^{-3}$ &$2.85\cdot 10^{-4}$&$1.80\cdot 10^{-4}$&$1.75\cdot 10^{-4}$&$2.11\cdot 10^{-3}$&18.2&23.5&17.6\\
     $Mix(WN(0,0.9),WN(\pi/2,0.75),0.5)$    &200  &$1.28\cdot 10^{-4}$  &$6.33\cdot 10^{-5}$&$7.92\cdot 10^{-5}$ &$6.61\cdot 10^{-5}$   &$9.35\cdot 10^{-6}$&$8.56$ &15.2& 10.6
     \\
     $Mix(WN(0,0.9),WN(\pi/2,0.9),0.5)$    &200&$3.93\cdot 10^{-4}$  &$9.73\cdot 10^{-5}$&$9.91\cdot 10^{-5}$ &$1.03\cdot 10^{-4}$   &$1.18\cdot 10^{-4}$ &$9.34$ &16.2&12.7
     \\
     $Mix(WN(0,0.9),WN(\pi/2,0.75),0.2)$  &200 &$5.63\cdot 10^{-5}$  &$4.36\cdot 10^{-5}$&$6.41\cdot 10^{-5}$ &$4.20\cdot 10^{-5}$   &$5.64\cdot 10^{-5}$&9.18  &11.5& 8.85
     \\
     $Mix(WN(0,0.9),WN(\pi/2,0.75),0.8)$   &200  &$7.03\cdot 10^{-4}$  &$1.56\cdot 10^{-4}$&$1.25\cdot 10^{-4}$ &$1.44\cdot 10^{-4}$   &$1.62\cdot 10^{-4}$&11.2  &20.5& 14.8
     \\
     $Mix(WN(0,0.75),WN(\pi/2,0.75),0.5)$    &200  &$2.79\cdot 10^{-5}$  &$3.65\cdot 10^{-5}$&$6.12\cdot 10^{-5}$ &$2.90\cdot 10^{-5}$   &$5.02\cdot 10^{-5}$ &7.79&11.3& 7.18\\
     $Mix(WN(0,0.75),WN(\pi/2,0.75),0.2)$    &200  &$5.74\cdot 10^{-5}$  &$4.44\cdot 10^{-5}$&$6.46\cdot 10^{-5}$ &$4.30\cdot 10^{-5}$ &  $5.80\cdot 10^{-4}$ &9.08&11.5&8.77
     \\
     $Mix(WN(0,0.75),WN(\pi,0.75),0.5)$    &200   &$4.24\cdot 10^{-5}$  &$4.47\cdot 10^{-5}$&$6.93\cdot 10^{-5}$ &$5.67\cdot 10^{-4}$ & $6.67\cdot 10^{-5}$ &$1.65$&13.0&7.84
     \\\hline 
    \end{tabular}
    }
    \caption{MISE for density estimation of mixtures of wrapped normal distributions}
    \label{tab:MixWN}
\end{table}

\begin{table}
    \centering
        \resizebox{\textwidth}{!}{
    \begin{tabular}{l|c|c|c|c|c|c|c|c|c|c}
    &&\multicolumn{5}{|c|}{$\operatorname{MISE}\left[\tilde{f}_n(\theta;m) \right]$}&\multicolumn{2}{|c|}{Average}&\\\hline
     Distribution&$n$    &$m=5$ &$m=10$&$m=[\sqrt{n}]$ &$m=m_{OP}$&$m=m_{ON}$&$m_{OP}$ &$m_{ON}$ &$m_{TH}$ \\\hline
     $WN(0,0.75)$    &50   &$5.26\cdot 10^{-4}$ &$4.18\cdot 10^{-3}$&$9.82\cdot 10^{-4}$ &$1.33\cdot 10^{-3}$&$2.63\cdot 10^{-3}$&7.62&8.76& 7.50 \\
      $WN(0,0.9)$  &50  &$2.90\cdot 10^{-3}$ &$5.41\cdot 10^{-3}$&$2.41\cdot 10^{-3}$&$3.79\cdot 10^{-3}$&$5.86\cdot 10^{-3}$&8.97&10.0&9.29\\
     $Mix(WN(0,0.9),WN(\pi/2,0.75),0.5)$    &50  &$6.32\cdot 10^{-4}$  &$4.32\cdot 10^{-3}$&$1.15\cdot 10^{-3}$ &$1.06\cdot 10^{-3}$   &$2.45\cdot 10^{-3}$&$6.77$ &8.45& 7.48
     \\
     $Mix(WN(0,0.9),WN(\pi/2,0.9),0.5)$    &50&$9.42\cdot 10^{-4}$  &$4.49\cdot 10^{-3}$&$1.37\cdot 10^{-3}$ &$1.36\cdot 10^{-3}$   &$2.65\cdot 10^{-4}$ &$6.98$ &8.48& 8.07
     \\
     $Mix(WN(0,0.9),WN(\pi/2,0.75),0.2)$  &50 &$4.16\cdot 10^{-4}$  &$4.12\cdot 10^{-3}$&$9.33\cdot 10^{-4}$ &$9.10\cdot 10^{-4}$   &$1.86\cdot 10^{-3}$&6.95  &7.92& 6.92
     \\
     $Mix(WN(0,0.9),WN(\pi/2,0.75),0.8)$   &50  &$1.56\cdot 10^{-3}$  &$4.74\cdot 10^{-3}$&$1.69\cdot 10^{-3}$ &$1.95\cdot 10^{-4}$   &$4.16\cdot 10^{-3}$&7.51  &9.44& 8.63     \\
     $Mix(WN(0,0.75),WN(\pi/2,0.75),0.5)$    &50  &$3.32\cdot 10^{-4}$  &$4.25\cdot 10^{-3}$&$8.85\cdot 10^{-4}$ &$6.93\cdot 10^{-4}$   &$1.88\cdot 10^{-3}$ &6.45&7.97& 6.32\\
     $Mix(WN(0,0.75),WN(\pi/2,0.75),0.2)$    &50  &$4.25\cdot 10^{-4}$  &$4.08\cdot 10^{-3}$&$9.40\cdot 10^{-4}$ &$8.99\cdot 10^{-4}$ &  $1.87\cdot 10^{-3}$ &6.91&7.91& 6.89
     \\
     $Mix(WN(0,0.75),WN(\pi,0.75),0.5)$    &50   &$4.02\cdot 10^{-4}$  &$4.11\cdot 10^{-3}$&$9.30\cdot 10^{-4}$ &$3.51\cdot 10^{-4}$ & $1.98\cdot 10^{-3}$ &$4.07$&8.24& 6.57
     \\\hline 
     
   $WN(0,0.75)$    &200   &$1.64\cdot 10^{-4}$ &$3.22\cdot 10^{-4}$&$1.89\cdot 10^{-3}$ &$2.31\cdot 10^{-4}$&$4.49\cdot 10^{-4}$&9.08&10.5& 9.15\\
      $WN(0,0.9)$  &200  &$1.94\cdot 10^{-3}$ &$7.55\cdot 10^{-4}$&$2.38\cdot 10^{-3}$&$8.86\cdot 10^{-4}$&$1.30\cdot 10^{-3}$&10.9&12.1& 11.3\\
     $Mix(WN(0,0.9),WN(\pi/2,0.75),0.5)$    &200  &$1.63\cdot 10^{-4}$  &$3.17\cdot 10^{-4}$&$1.84\cdot 10^{-3}$ &$1.71\cdot 10^{-4}$   &$3.71\cdot 10^{-4}$&8.05 &10.1& 9.11 
     \\
     $Mix(WN(0,0.9),WN(\pi/2,0.9),0.5)$    &200&$3.46\cdot 10^{-4}$  &$3.72\cdot 10^{-4}$&$1.87\cdot 10^{-3}$ &$2.56\cdot 10^{-4}$   &$4.38\cdot 10^{-4}$ &8.39 &10.2& 9.85
     \\
     $Mix(WN(0,0.9),WN(\pi/2,0.75),0.2)$  &200 &$7.68\cdot 10^{-5}$  &$2.81\cdot 10^{-4}$&$1.90\cdot 10^{-3}$ &$1.41\cdot 10^{-4}$   &$2.61\cdot 10^{-4}$&8.30  &9.25& 8.43
     \\
     $Mix(WN(0,0.9),WN(\pi/2,0.75),0.8)$   &200  &$7.95\cdot 10^{-4}$  &$5.17\cdot 10^{-4}$&$2.07\cdot 10^{-3}$ &$4.34\cdot 10^{-4}$   &$8.26\cdot 10^{-4}$&9.03 &11.4& 10.5
     \\
     $Mix(WN(0,0.75),WN(\pi/2,0.75),0.5)$    &200  &$4.25\cdot 10^{-5}$  &$2.66\cdot 10^{-4}$&$1.83\cdot 10^{-3}$ &$1.01\cdot 10^{-4}$   &$2.45\cdot 10^{-4}$ &7.95&9.29&7.71 \\
     $Mix(WN(0,0.75),WN(\pi/2,0.75),0.2)$    &200  &$7.78\cdot 10^{-5}$  &$2.94\cdot 10^{-4}$&$1.95\cdot 10^{-3}$ &$1.41\cdot 10^{-4}$ &  $2.90\cdot 10^{-4}$ &8.23&9.31& 8.40
     \\
     $Mix(WN(0,0.75),WN(\pi,0.75),0.5)$    &200   &$5.49\cdot 10^{-5}$  &$2.59\cdot 10^{-4}$&$1.76\cdot 10^{-3}$ &$7.22\cdot 10^{-5}$ & $2.65\cdot 10^{-4}$ &4.00&9.72& 8.01
     \\\hline 
    \end{tabular}
    }
    \caption{MISE for density estimation of mixtures of wrapped normal distributions in the presence of classical error with wrapped Laplace WL(0.2) distribution}
    \label{tab:MixWNclassic}
\end{table}
\newpage

\begin{table}
    \centering
    \resizebox{\textwidth}{!}{
    \begin{tabular}{c|c|cc|cc|cc|cc|cc}
    &&\multicolumn{8}{c|}{Error distribution}&\\\hline
    &    &\multicolumn{2}{c|}{ none}&\multicolumn{2}{c|}{$WL(0.1)$}&\multicolumn{2}{c|}{$WL(0.2)$} &\multicolumn{2}{c|}{$U[-\frac{\pi}{12},-\frac{\pi}{12}]$} &\multicolumn{2}{|c}{Average} \\
& $n$ & param & nonpar & param & nonpar & param & nonpar & param & nonpar &$m_{OP}$&$m_{ON}$ 
\\\hline    
    \multirow{4}{*}{\rotatebox{90}{$VM(\pi,5)$}}&50&$2.68\cdot10^{-3}$&$1.33\cdot10^{-2}$&$1.68\cdot10^{-3}$&$2.41\cdot10^{-3}$&$2.25\cdot10^{-3}$&$2.01\cdot10^{-3}$&$1.54\cdot10^{-3}$&$1.45\cdot10^{-3}$&10.9&13.8\\
    &100&$5.81\cdot10^{-3}$&$3.58\cdot10^{-2}$&$1.19\cdot10^{-3}$&$2.71\cdot10^{-3}$&$1.33\cdot10^{-3}$&$1.25\cdot10^{-3}$&$6.93\cdot10^{-4}$&$6.44\cdot10^{-4}$&13.6&17.6\\
    &200&$2.28\cdot10^{-2}$&$8.99\cdot10^{-2}$&$1.92\cdot10^{-3}$&$4.47\cdot10^{-3}$&$1.12\cdot10^{-2}$&$1.13\cdot10^{-3}$&$4.70\cdot10^{-4}$&$4.54\cdot10^{-4}$&16.9&22.1\\
&500&$9.13\cdot10^{-2}$&$9.62\cdot10^{-1}$&$4.36\cdot10^{-3}$&$1.67\cdot10^{-2}$&$9.89\cdot10^{-4}$&$1.25\cdot10^{-3}$&$3.15\cdot10^{-4}$&$3.98\cdot10^{-4}$&22.9&45.3\\\hline
    \multirow{4}{*}{\rotatebox{90}{$VM(0,1)$}}&50&$1.69\cdot10^{-4}$&$3.80\cdot10^{-4}$&$1.62\cdot10^{-4}$&$2.72\cdot10^{-4}$&$1.56\cdot10^{-4}$&$1.81\cdot10^{-4}$&$1.61\cdot10^{-4}$&$2.50\cdot10^{-4}$&4.28&7.48\\
    &100&$6.86\cdot10^{-5}$&$4.08\cdot10^{-4}$&$6.57\cdot10^{-5}$&$1.30\cdot10^{-4}$&$6.51\cdot10^{-5}$&$6.84\cdot10^{-5}$&$6.50\cdot10^{-5}$&$9.63\cdot10^{-4}$&5.31&9.82\\
    
    &200&$2.95\cdot10^{-5}$&$4.58\cdot10^{-4}$&$2.85\cdot10^{-5}$&$6.14\cdot10^{-5}$&$3.03\cdot10^{-5}$&$2.72\cdot10^{-5}$&$2.81\cdot10^{-5}$&$3.15\cdot10^{-5}$&6.58&11.4\\
&500&$9.02\cdot10^{-6}$&$1.77\cdot10^{-3}$&$8.33\cdot10^{-6}$&$8.55\cdot10^{-5}$&$9.55\cdot10^{-6}$&$1.11\cdot10^{-5}$&$8.09\cdot10^{-6}$&$8.29\cdot10^{-6}$&8.99&15.0\\\hline
    \multirow{4}{*}{\rotatebox{90}{$WN(\frac{\pi}{2},\frac{3}{4})$}}&50&$4.04\cdot10^{-4}$&$5.00\cdot10^{-4}$&$3.68\cdot10^{-4}$&$4.08\cdot10^{-4}$&$3.71\cdot10^{-4}$&$3.78\cdot10^{-4}$&$3.59\cdot10^{-4}$&$3.88\cdot10^{-4}$&7.31&7.81\\
    &100&$1.86\cdot10^{-4}$&$6.74\cdot10^{-4}$&$1.67\cdot10^{-4}$&$2.26\cdot10^{-4}$&$1.90\cdot10^{-4}$&$1.89\cdot10^{-4}$&$1.62\cdot10^{-4}$&$1.74\cdot10^{-4}$&9.05&10.1\\
    &200&$1.74\cdot10^{-3}$&$9.65\cdot10^{-4}$&$8.29\cdot10^{-5}$&$1.41\cdot10^{-4}$&$9.29\cdot10^{-5}$&$9.66\cdot10^{-5}$&$6.77\cdot10^{-5}$&$7.15\cdot10^{-5}$&11.4&12.0\\
&500&$3.55\cdot10^{-3}$&$5.47\cdot10^{-3}$&$2.15\cdot10^{-4}$&$2.88\cdot10^{-4}$&$6.30\cdot10^{-5}$&$6.59\cdot10^{-5}$&$2.87\cdot10^{-5}$&$2.97\cdot10^{-5}$&15.3&15.7\\\hline\multirow{4}{*}{\rotatebox{90}{$WN(\frac{\pi}{2},\frac{9}{10})$}}&50&$2.44\cdot10^{-3}$&$2.76\cdot10^{-3}$&$1.48\cdot10^{-3}$&$1.53\cdot10^{-3}$&$2.05\cdot10^{-3}$&$2.08\cdot10^{-3}$&$1.32\cdot10^{-3}$&$1.35\cdot10^{-3}$&11.2&11.1\\
    &100&$6.94\cdot10^{-3}$&$1.16\cdot10^{-2}$&$1.41\cdot10^{-3}$&$1.67\cdot10^{-3}$&$1.51\cdot10^{-3}$&$1.51\cdot10^{-3}$&$8.05\cdot10^{-4}$&$8.15\cdot10^{-4}$&13.8&14.3\\
    &200&$2.72\cdot10^{-2}$&$3.12\cdot10^{-2}$&$2.14\cdot10^{-3}$&$2.30\cdot10^{-3}$&$1.13\cdot10^{-3}$&$1.13\cdot10^{-3}$&$4.54\cdot10^{-3}$&$4.57\cdot10^{-4}$&17.3&17.6\\
&500&$1.02\cdot10^{-1}$&$2.07\cdot10^{-1}$&$4.85\cdot10^{-3}$&$7.40\cdot10^{-3}$&$1.08\cdot10^{-3}$&$1.13\cdot10^{-3}$&$3.47\cdot10^{-4}$&$3.57\cdot10^{-4}$&23.4&27.8\\\hline
    \end{tabular}
    }
    \caption{MISE for density estimation based on rounded samples from different sampling distributions (leftmost column) and  different Berkson error distributions. In the left columns headed {\itshape param}, $m _{opt}$ has been calculated under a parametric assumption; the right columns headed {\itshape nonpar} use the non-parametric procedure.}
    \label{tab:BerksonRound}
\end{table}

\newpage

\begin{table}
    \centering
        \resizebox{\textwidth}{!}{
    \begin{tabular}{l|c|c|c|c|c|c|c|c}
    &&\multicolumn{4}{|c|}{$\MISE(F^{-\pi};\hat{F}_{m,n}^{-\pi})$}    
    &\multicolumn{1}{|c|}{Average}&\\\hline
     Distribution&$n$    &$m=5$ &$m=10$&$m=[\sqrt{n}]$ &$m=m_{OP}$&$m_{OP}$ &$m_{TH}$ \\\hline
     $VM(0,5)$    &50   & $2.37\cdot10^{-4}$ & $6.64\cdot10^{-5}$ & $1.14\cdot10^{-4}$ & $4.36\cdot10^{-5}$ &30.4 & 29.3\\
     $VM(\pi/2,5)$    &50   & $4.49\cdot10^{-4}$ & $9.18\cdot10^{-5}$ & $1.89\cdot10^{-4}$ & $4.18\cdot10^{-5}$ &39.0 &38.0 \\
     $VM(\pi,5)$    &50   & $5.84\cdot10^{-4}$ & $6.85\cdot10^{-4}$ & $5.72\cdot10^{-4}$ & $8.20\cdot10^{-4}$ &10.4 &9.03 \\
      $VM(0,1)$    &50  & $1.99\cdot10^{-4}$ & $2.38\cdot10^{-4}$ & $2.15\cdot10^{-4}$ & $2.40\cdot10^{-4}$ &8.79 & 7.78\\
      $VM(\pi/2,1)$    &50  & $3.63\cdot10^{-4}$ & $3.08\cdot10^{-4}$ & $3.18\cdot10^{-4}$ & $3.57\cdot10^{-4}$ &12.1 &11.3\\
      $VM(\pi,1)$    &50  & $3.73\cdot10^{-4}$ & $5.93\cdot10^{-4}$ & $4.72\cdot10^{-4}$ & $5.05\cdot10^{-4}$ &6.24 &5.19 \\
     $Mix(VM(0,5),VM(\pi/2,1),0.5)$    &50  & $1.91\cdot10^{-4}$ & $1.80\cdot10^{-4}$ & $1.77\cdot10^{-4}$ & $1.93\cdot10^{-4}$ &10.6 & 11.5
     \\
     $Mix(VM(0,5),VM(\pi/2,5),0.5)$    &50  &$1.80\cdot10^{-4}$ & $1.29\cdot10^{-4}$ & $1.41\cdot10^{-4}$ & $1.36\cdot10^{-4}$ &15.6 & 17.8
     \\
     $Mix(VM(0,5),VM(\pi/2,1),0.2)$    &50 &$2.23\cdot10^{-4}$ & $2.15\cdot10^{-4}$ & $2.09\cdot10^{-4}$ & $2.49\cdot10^{-4}$ &9.97 & 9.73
     \\
     $Mix(VM(0,5),VM(\pi/2,1),0.8)$    &50  & $2.03\cdot10^{-4}$ & $1.09\cdot10^{-4}$ & $1.37\cdot10^{-4}$ & $1.04\cdot10^{-4}$ &17.7 & 19.6
     \\
     $Mix(VM(0,1),VM(\pi/2,1),0.5)$    &50  &$2.04\cdot10^{-4}$ & $2.54\cdot10^{-4}$ & $2.23\cdot10^{-4}$ & $2.55\cdot10^{-4}$ &7.28 & 6.80\\
     $Mix(VM(0,1),VM(\pi/2,1),0.2)$    &50  & $2.42\cdot10^{-4}$ & $2.33\cdot10^{-4}$ & $2.27\cdot10^{-4}$ & $2.64\cdot10^{-4}$ &9.42 & 9.15
     \\
     $Mix(VM(0,5),VM(\pi,5),0.5)$   &50   &$2.20\cdot10^{-4}$ & $3.10\cdot10^{-4}$ & $2.52\cdot10^{-4}$ & $2.44\cdot10^{-4}$ &3.77 &6.69 
     \\\hline 
          
     $VM(0,5)$    &200   & $1.68\cdot10^{-4}$ & $2.11\cdot10^{-5}$ & $9.23\cdot10^{-6}$ & $2.40\cdot10^{-6}$ &82.9  & 82.0\\
     $VM(\pi/2,5)$    &200   & $3.62\cdot10^{-4}$ & $4.14\cdot10^{-5}$ & $1.64\cdot10^{-5}$ & $2.86\cdot10^{-6}$ &110  &109 \\
     $VM(
     \pi,5)$    &200   & $2.16\cdot10^{-4}$ & $7.39\cdot10^{-5}$ & $6.88\cdot10^{-5}$ & $7.76\cdot10^{-5}$ &22.2  &21.1 \\
      $VM(0,1)$    &200  & $2.10\cdot10^{-5}$ & $1.52\cdot10^{-5}$ & $1.50\cdot10^{-5}$ & $1.60\cdot10^{-5}$ &18.3 & 17.5\\
       $VM(\pi/2,1)$    &200  & $8.94\cdot10^{-5}$ & $3.97\cdot10^{-5}$ & $3.25\cdot10^{-5}$ & $3.01\cdot10^{-5}$ &28.3 & 27.8\\
        $VM(\pi,1)$    &200  & $3.55\cdot10^{-5}$ & $3.75\cdot10^{-5}$ & $4.26\cdot10^{-5}$ & $4.14\cdot10^{-5}$ &11.0 &10.2 \\
     $Mix(VM(0,5),VM(\pi/2,1),0.5)$    &200  & $4.14\cdot10^{-5}$ & $1.82\cdot10^{-5}$ & $1.57\cdot10^{-5}$ & $1.54\cdot10^{-5}$ &23.5 & 28.2
     \\
     $Mix(VM(0,5),VM(\pi/2,5),0.5)$    &200  &$6.26\cdot10^{-5}$ & $1.78\cdot10^{-5}$ & $1.30\cdot10^{-5}$ & $1.04\cdot10^{-5}$ &38.6 & 46.9
     \\
     $Mix(VM(0,5),VM(\pi/2,1),0.2)$    &200 &$5.37\cdot10^{-4}$ & $2.74\cdot10^{-5}$ & $2.39\cdot10^{-5}$ & $2.42\cdot10^{-5}$ &21.5 & 23.2
     \\
     $Mix(VM(0,5),VM(\pi/2,1),0.8)$    &200  & $8.87\cdot10^{-5}$ & $1.89\cdot10^{-5}$ & $1.17\cdot10^{-5}$ & $7.05\cdot10^{-6}$ &44.7 & 52.3
     \\
     $Mix(VM(0,1),VM(\pi/2,1),0.5)$    &200  &$2.08\cdot10^{-5}$ & $1.71\cdot10^{-5}$ & $1.81\cdot10^{-5}$ & $2.05\cdot10^{-5}$ &14.5 & 14.7\\
     $Mix(VM(0,1),VM(\pi/2,1),0.2)$    &200  & $5.04\cdot10^{-5}$ & $2.77\cdot10^{-5}$ & $2.51\cdot10^{-5}$ & $2.70\cdot10^{-5}$ &21.0 & 21.5
     \\
    $Mix(VM(0,5),VM(\pi,5),0.5)$  &200   &$4.09\cdot10^{-5}$ & $2.66\cdot10^{-5}$ & $2.87\cdot10^{-5}$ & $7.57\cdot10^{-5}$ &3.72 &14,4 
     \\\hline\hline
    \end{tabular}
    }
    \caption{MISE for CDF estimation with origin $\theta_0 = -\pi$ for (mixtures of) von Mises distributions}
    \label{tab:MixVMCDF}
\end{table}

\newpage

\begin{table}[!ht]
    \centering
        \resizebox{\textwidth}{!}{
    \begin{tabular}{l|c|c|c|c|c|c|c|c|c}
    &&\multicolumn{4}{|c|}{$\MISE(F^{\theta_0};\hat{F}_{m,n}^{\theta_0})$}    
    &\multicolumn{2}{|c|}{Average}&TH\\\hline
     Distribution&$n$    &$m=5$ &$m=10$&$m=[\sqrt{n}]$ &$m=m_{OP}$&$m_{OP}$ &$\theta^{OP}_0$&$\theta^{OP}_0$ \\\hline
     $VM(0,5)$    &50   & $2.47\cdot10^{-4}$ & $6.88\cdot10^{-5}$ & $1.19\cdot10^{-4}$ & $4.27\cdot10^{-5}$ &30.4 & -3.14&-3.14\\
     $VM(\pi/2,5)$    &50   & $2.47\cdot10^{-4}$ & $6.88\cdot10^{-5}$ & $1.19\cdot10^{-4}$ & $4.27\cdot10^{-5}$ &30.4 & -1.56&-1.57\\
    $VM(\pi,5)$    &50   & $2.47\cdot10^{-4}$ & $6.88\cdot10^{-5}$ & $1.19\cdot10^{-4}$ & $4.27\cdot10^{-5}$ &30.4 & 0.01&0.00\\
      $VM(0,1)$    &50  & $2.42\cdot10^{-4}$ & $2.44\cdot10^{-4}$ & $2.39\cdot10^{-4}$ & $2.54\cdot10^{-4}$ &8.87 & -3.11&-3.14\\
      $VM(\pi/2,1)$    &50  & $2.42\cdot10^{-4}$ & $2.44\cdot10^{-4}$ & $2.39\cdot10^{-4}$ & $2.54\cdot10^{-4}$ &8.87 & -1.55&-1.57\\
      $VM(\pi,1)$    &50  & $2.42\cdot10^{-4}$ & $2.44\cdot10^{-4}$ & $2.39\cdot10^{-4}$ & $2.54\cdot10^{-4}$ &8.87 & 0.02&0.00\\
    $Mix(VM(0,5),VM(\pi/2,1),0.5)$    &50  & $2.70\cdot10^{-4}$ & $2.41\cdot10^{-4}$ & $2.49\cdot10^{-4}$ & $2.56\cdot10^{-4}$ &10.4 & -2.40&-2.52
     \\
     $Mix(VM(0,5),VM(\pi/2,5),0.5)$    &50  &$1.54\cdot10^{-4}$ & $1.28\cdot10^{-4}$ & $1.33\cdot10^{-4}$ & $1.35\cdot10^{-4}$ &14.1 & -2.36&-2.36
     \\
     $Mix(VM(0,5),VM(\pi/2,1),0.2)$    &50 &$2.28\cdot10^{-4}$ & $2.42\cdot10^{-4}$ & $2.32\cdot10^{-4}$ & $2.44\cdot10^{-4}$ &8.00 & -1.94&-1.98
     \\
     $Mix(VM(0,5),VM(\pi/2,1),0.8)$    &50  & $2.48\cdot10^{-4}$ & $1.28\cdot10^{-4}$ & $1.67\cdot10^{-4}$ & $1.16\cdot10^{-4}$ &18.4 & -2.70&-2.94
     \\
     $Mix(VM(0,1),VM(\pi/2,1),0.5)$    &50  &$3.03\cdot10^{-4}$ & $3.41\cdot10^{-4}$ & $3.20\cdot10^{-4}$ & $3.25\cdot10^{-4}$ &6.39 & -2.37&-2.36\\
     $Mix(VM(0,1),VM(\pi/2,1),0.2)$    &50  & $2.24\cdot10^{-4}$ & $2.45\cdot10^{-4}$ & $2.32\cdot10^{-4}$ & $2.43\cdot10^{-4}$ &7.18 & -1.86&-1.85
     \\
     $Mix(VM(0,5),VM(\pi,5),0.5)$    &50   &$5.25\cdot10^{-4}$ & $5.22\cdot10^{-4}$ & $5.14\cdot10^{-4}$ & $5.81\cdot10^{-5}$ &4.16 & 1.58\footnotemark&$\pm1.57$
     \\\hline
          
     $VM(0,5)$    &200   & $1.71\cdot10^{-4}$ & $2.17\cdot10^{-5}$ & $9.62\cdot10^{-6}$ & $2.72\cdot10^{-6}$ &83.3  & -3.14&-3.14\\
      $VM(\pi/2,5)$    &200   & $1.71\cdot10^{-4}$ & $2.17\cdot10^{-5}$ & $9.62\cdot10^{-6}$ & $2.72\cdot10^{-6}$ &83.3 & -1.57&-1.57\\
    $VM(\pi,5)$    &200   & $1.71\cdot10^{-4}$ & $2.17\cdot10^{-5}$ & $9.62\cdot10^{-6}$ & $2.72\cdot10^{-6}$ &83.3& 0.00&0.00\\
      $VM(0,1)$    &200  & $2.81\cdot10^{-5}$ & $2.02\cdot10^{-5}$ & $1.94\cdot10^{-5}$ & $2.00\cdot10^{-5}$ &18.3 & -3.13&-3.14\\
     $VM(\pi/2,1)$    &50  & $2.81\cdot10^{-5}$ & $2.02\cdot10^{-5}$ & $1.94\cdot10^{-5}$ & $2.00\cdot10^{-5}$ &18.3 & -1.58&-1.57\\
      $VM(\pi,1)$    &50  & $2.81\cdot10^{-5}$ & $2.02\cdot10^{-5}$ & $1.94\cdot10^{-5}$ & $2.00\cdot10^{-5}$ &18.3& 0.00&0.00\\$Mix(VM(0,5),VM(\pi/2,1),0.5)$    &200  & $3.62\cdot10^{-5}$ & $1.80\cdot10^{-5}$ & $1.57\cdot10^{-5}$ & $1.55\cdot10^{-5}$ &22.3 & -2.49&-2.52
     \\
     $Mix(VM(0,5),VM(\pi/2,5),0.5)$    &200  &$4.19\cdot10^{-5}$ & $1.45\cdot10^{-5}$ & $1.15\cdot10^{-5}$ & $1.00\cdot10^{-5}$ &34.0 & -2.36&-2.36
     \\
     $Mix(VM(0,5),VM(\pi/2,1),0.2)$    &200 &$2.58\cdot10^{-4}$ & $2.05\cdot10^{-5}$ & $2.02\cdot10^{-5}$ & $2.09\cdot10^{-5}$ &15.5 & -1.98&-1.98
     \\
     $Mix(VM(0,5),VM(\pi/2,1),0.8)$    &200  & $9.14\cdot10^{-5}$ & $2.01\cdot10^{-5}$ & $1.26\cdot10^{-5}$ & $7.42\cdot10^{-6}$ &45.1 & -2.88&-2.94
     \\
     $Mix(VM(0,1),VM(\pi/2,1),0.5)$    &200  &$1.89\cdot10^{-5}$ & $1.82\cdot10^{-5}$ & $1.90\cdot10^{-5}$ & $1.95\cdot10^{-5}$ &11.6 & -2.35&-2.36\\
     $Mix(VM(0,1),VM(\pi/2,1),0.2)$    &200  & $2.06\cdot10^{-5}$ & $1.74\cdot10^{-5}$ & $1.75\cdot10^{-5}$ & $1.83\cdot10^{-5}$ &14.0 & -1.85&-1.85
     \\
     $Mix(VM(0,5),VM(\pi,5),0.5)$    &200   &$6.78\cdot10^{-5}$ & $3.89\cdot10^{-5}$ & $3.57\cdot10^{-5}$ & $9.73\cdot10^{-5}$ &4.10 & 1.58\footnotemark[1]&$\pm1.57$
     \\\hline
    
    \end{tabular}
    }
    \caption{MISE for CDF estimation with estimated origin $\theta_0$ for (mixtures of) von Mises distributions\\ \footnotesize{*Since the theoretical minimum is not unique and occurs at two opposite numbers, the average of the absolute values is taken.}}
    \label{tab:MixVMCDFestimatedTheta}
  
\end{table}

\end{document}